\documentclass[a4paper,12pt]{extarticle}

\usepackage{titlesec}

\setcounter{secnumdepth}{4}

\titleformat{\paragraph}
{\normalfont\normalsize\bfseries}{\theparagraph}{1em}{}
\titlespacing*{\paragraph}
{0pt}{3.25ex plus 1ex minus .2ex}{1.5ex plus .2ex}

\usepackage{hyperref}
\usepackage{amsmath}

\usepackage{enumerate}

\usepackage{amsthm}
\usepackage{amsfonts}
\usepackage{subeqnarray}
\usepackage{mathrsfs}
\usepackage{latexsym}
\usepackage{dsfont}
\usepackage{mathtools} 
\usepackage{upgreek}
\usepackage{stmaryrd}






\usepackage[margin=2cm]{geometry}
\allowdisplaybreaks[4]

\newcommand{\ud}{\mathrm{d}}
\newcommand{\beq}{\begin{equation}}
\newcommand{\eeq}{\end{equation}}
\newcommand{\ba}{\begin{eqnarray}}
\newcommand{\ea}{\end{eqnarray}}

\newcommand{\f}{\varphi}
\newcommand{\ms}{\mathscr}
\newcommand{\mb}{\mathbb}

\newcommand{\e}{\epsilon}
\newcommand{\g}{\kappa}

\newcommand\lb{\llbracket}
\newcommand\rb{\rrbracket}

\newtheorem{theorem}{Theorem}[section]
\newtheorem{remark}{Remark}[section]

\newtheorem{lemma}{Lemma}[section]

\title{New Temperature Dependent Configurational Probability Diffusion Equation For Diluted FENE Polymer Fluids: Existence of Solution Results}

\author{
Ionel Sorin Ciuperca $^1$ and Liviu Iulian Palade$^2$ 
\footnote{Corresponding author.  E-mail:  liviu-iulian.palade@insa-lyon.fr. }
\thanks{Dedicated to the memory of late Professor Genevi\`eve Raugel, Universit\'e Paris-Sud, in fond remembrance.
}   
}


\numberwithin{equation}{section}

\begin{document}

\maketitle

\begin{flushleft}

Universit\'e de Lyon, CNRS, Institut Camille Jordan UMR 5208\\

$^1$ Universit\'e Lyon 1,  B\^at Braconnier, 43 Boulevard du 11 Novembre 1918, F-69622, Villeurbanne, France. 

$^2$ INSA-Lyon,  P\^ole de Math\'ematiques, B\^at. Leonard de Vinci No. 401, 21 Avenue Jean Capelle, F-69621, Villeurbanne, France.

\end{flushleft}

\begin{abstract}
The theory for the non-isothermal rheology of polymer fluids proposed in ~\cite{bird3} used  several approximations including the so-called linear gradient approximations for the temperature field and Brownian forces.  While it had the significant advantage of dealing with linear equations, the approximations involved may have led to several non-physical predictions.  This work is a continuation of ~\cite{bird3} in that it obtains the corresponding non-linear configurational probability density  equation in dimensionless form without the linear gradient approximations for the temperature field and Brownian forces.  It does so for incompressible  diluted polymer solutions with polymer molecules being modeled as FENE (\textit{F}initely \textit{E}xtensible \textit{N}onlinear \textit{E}lastic) chains.  Next we prove the existence of temperature dependent, positive variational solutions for the probability density equation of the FENE model.


\end{abstract}

\begin{flushleft}

\textit{Keywords}: FENE polymer chain models; non-isothermal polymer kintetic theory;  dimensionless configurational probability diffusion equation; solution existence results. \\
 
\textit{AMS subject classification}: Primary 35Q35; Secondary 35A15.


\end{flushleft}


\section{Introduction}\label{intro}

All polymeric liquid flows of practical importance and in everyday life - e.g. injection molding, hydrodynamics of biological fluids, thin film flows, lubrication (to name only a very few) - are subjected to significant heat transfer  processes.  Therefore all these flows are non-isothermal and require to be studied as such (see e.g. the recent work ~\cite{fhy1}, ~\cite{mp1} and references cited therein).  However, such an undertaking is rendered more complicated by the necessity to focus, in addition to the momentum balance and fluid constitutive equations, on the temperature equation as well.

In the kinetic theory of polymer dynamics one is interested in producing constitutive equations by taking into account the dominant interactions between fluid constituents (polymer-polymer, polymer-solvent molecules) which ultimately govern the macroscopic physical properties.  In doing so, a configurational (conformation) probability  diffusion equation - hereafter referred to  as the CPD equation - is obtained; its solution, denoted $\Psi $, enters both the expression of the momentum balance (via the stress tensor expression) and the temperature balance equation. If the temperature is not held constant, all three aforementioned governing equations are interrelated, forming a system of equations. 

Curtiss and Bird undertook in ~\cite{bird3} to extend existing isothermal polymer kinetic theory results to non-isothermal flows.  In order to present simpler forms for the balance law equations, the velocity, temperature and concentration fields are approximated via first order truncated Taylor expansions (higher order derivatives being deemed as negligible), hence the so-called linear gradients approximations; this assumption is listed as the 5th simplifying hypothesis, see page 85 of Section 17 in ~\cite{bird3}.  Consequently, the temperature is expressed as a first order approximation in equation (12.3) on page 49 in ~\cite{bird3}, and its impact on the Brownian force appears in equation (12.16) on page 53 of the same.  The theory's appealing elegance and acclamation notwithstanding, some of the approximations involved (like the linear gradients approximation) may have led to some nonphysical predictions, such as infinite viscosity at a finite extensional rate or the absence of influence of temperature gradients  orthogonal to the direction of flow, as noted in ~\cite{bird3}.   As a matter of fact, the Authors themselves invite on page 85 of ~\cite{bird3} for further work as the ``...major assumptions can and should be challenged''. 

This paper is a first step and exploratory in nature continuation of the work in ~\cite{bird3} in that it focuses on the theory without using linear gradients approximations.  Specifically, it is devoted to: 

\begin{enumerate}[$\bullet$]
 \item obtaining the dimensionless form of the CPD equation without the linear gradient approximation for the temperature field and Brownian forces; for the later a classical, Boltzmann description, is assumed instead
 \item proving the existence of positive variational solutions to the \textit{F}initely \textit{E}xtensible \textit{N}onlinear \textit{E}lastic (FENE) model CPD equation using Schauder's fixed-point theory
\end{enumerate}

As to the second goal, the importance of studying the full system of equations consisting of momentum balance CPD (and of the closely related stress tensor) and the heat diffusion equations notwithstanding (see for instance  ~\cite{zz2}, ~\cite{jblo1}, ~\cite{bs2}, ~\cite{bs3}, ~\cite{bcip}), it is to be noted here that it is common matter in the rheology literature (as per ~\cite{bird2}, ~\cite{faith}, ~\cite{yhl}) to first focus on the CPD alone, see e.g. ~\cite{bcip}, ~\cite{ch1}, ~\cite{chp1}, ~\cite{chp2}, ~\cite{chpp1}, ~\cite{cp1} - ~\cite{cp4},  ~\cite{lip1}. 

The paper is organized as follows:

\begin{enumerate}[$\bullet$]
 \item Section \ref{p1} is devoted to obtaining the CPD equation in dimensionless form for the FENE  model  
\item Section \ref{p2} deals with proving the existence of variational solutions to the CPD equation of the FENE model
 \subitem Subsection \ref{e} introduces the mathematical problem under scrutiny
 \subitem Subsection \ref{exreg} gives the proof of the existence of positive solutions to the regularized problem 
 \subitem Subsection \ref{ee} gives estimates uniform in $\e$
 \subitem Subsection \ref{nge} summarizes the final results and gives the main existence result
\end{enumerate}

\section{Modeling The Non-Isothermal Dynamics: The CPD Equation For \textit{FENE} Polymer Chains }\label{p1}


The notations are akin to those of ~\cite{bird1} - ~\cite{bird2} (see also ~\cite{tig}).  The fluids are assumed incompressible. 

A polymer molecule in a diluted solution is here modeled as a FENE chain, that is as an elastic dumbbell with the polymer mass concentrated at its two extremities (see Chapter 13 in ~\cite{bird2} for a detailed description).  Let $x$ be the Eulerian (macroscopic) variable,  $Q$ the end-to-end microscopic vector, and $F^{(c)}$ the non-linear elastic recoil spring force. Denote by $\Psi=\Psi(x,Q,t)$ the configurational probability, where  $\nabla_y$  is the gradient in the $y$-direction.

Let $\g=\nabla_x v(x,t) $ be the macroscopic velocity gradient.  Moreover, $\g\cdot x= k_{ij}x_j $ by Einstein's convention for repeated indices. The solvent is here assumed to be a classical, with shear rate independent but temperature dependent $\eta_s$ viscosity Newtonian fluid.   

Let $\nu=1,2$ label each individual bead.  In the absence of external forces, the (vector) force balance equation for each bead reads: 

\beq\label{cd1}
F_{\nu}^{(\phi)}+F_{\nu}^{(h)}+F_{\nu}^{(b)}=0,~ \nu=1,2
\eeq

In the above,  $F_{\nu}^{(\phi)}$ is the intramolecular force that accounts for the polymer molecule entropic elasticity, here formally modeled by the \textit{FENE} spring and assumed to be a potential force (i.e. $\displaystyle  F_{\nu}^{(\phi)}=-\nabla_{r_\nu}\Phi $, where $\Phi $ is a given potential function and $r_\nu$ is the position vector of bead $\nu$).

Next, $F_{\nu}^{(h)} $ is the hydrodynamic drag force and $F_{\nu}^{(b)}$ is the Brownian force caused by thermal fluctuations (that pushes beads to jostle about randomly). Their expressions are:

\beq\label{cd2}
F_{\nu}^{(h)}=-\zeta \left[ \bigl\lb \stackrel{\cdot}{r_\nu}\bigr\rb-v_\nu \right],~ \nu=1,2 
\eeq

where $\zeta$ is the hydrodynamic drag coefficient, $\bigl\lb ~ \bigr\rb$ stands for a velocity-space average (as in equation 13.1-4 of ~\cite{bird2}), and

\beq\label{cd3}
F_{\nu}^{(b)}= -k_B  \nabla_{r_\nu}\left(T\left( r_\nu,t\right) \ln \Psi\left(x,Q,t \right)\right) ,~ \nu=1,2 
\eeq

with $k_B$ the Boltzmann's constant, and $T$ denoting the temperature.

Writing $v_\nu=v_0+\g\cdot  r_\nu  $ (basically a 1st order expansion of $v_\nu$ about $v_0$), with the help of \eqref{cd2}-\eqref{cd3} the force balance equation \eqref{cd1} reads:

\beq\label{cd4}
-\zeta \left[ \bigl\lb \stackrel{\cdot}{r_\nu}\bigr\rb -v_0-\g \cdot  r_\nu   \right] -k_B \nabla_{r_\nu}\left(T\left( r_\nu,t\right) \ln \Psi\left(x,Q,t \right)  \right) +F_{\nu}^{(\phi)}=0,~ \nu=1,2 
\eeq

Summing over $\nu$ in \eqref{cd4}, and because $F_{1}^{(\phi)}+F_{2}^{(\phi)}=0$, it leads to:

\begin{align}\label{cd5}
&   -   \zeta  \left[\bigl\lb \stackrel{\cdot}{r_1}+\stackrel{\cdot}{r_2} \bigr\rb -2 v_0 - \g \cdot\left( r_1+r_2\right)  \right] &  \nonumber\\ 
&  -  k_B  \left[    \nabla_{r_1}\left( T\left( r_1,t\right) \ln \Psi\left(x,Q,t \right) \right) + \nabla_{r_2}\left( T\left( r_2,t\right) \ln \Psi\left(x,Q,t \right) \right) \right]  =0  &                                 
\end{align}

We have that $\bigl\lb\stackrel{\cdot}{r_1}+\stackrel{\cdot}{r_2} \bigr\rb=2\bigl\lb\stackrel{\cdot}{r_c} \bigr\rb $, and $\g \cdot \left(r_1+r_2 \right)=2\g\cdot \left(  r_c \right) $.  In order to relate the microscopic molecular scale to the macroscopic fluid flow scale, we make the following homogenization assumption: that $r_c$ and the outer $x$ Eulerian variable are the same.  Now, since $r_1\left(r_c,Q \right)=r_c-\dfrac{1}{2}Q  $ and $r_2\left(r_c,Q \right)=r_c+\dfrac{1}{2}Q  $, then $\nabla_{r_1}= \dfrac{1}{2} \nabla_x-\nabla_Q$ and $\nabla_{r_2}= \dfrac{1}{2} \nabla_x+\nabla_Q$.  Moreover, we Taylor expand $T(r_{1,2},t) $ to get, respectively, $T(r_{1},t)=T\left( r_c-\dfrac{1}{2}Q,t\right) \simeq T(x,t)- \dfrac{1}{2} Q\cdot \nabla_x T(x,t)$ and $T(r_{2},t)=T\left( r_c+\dfrac{1}{2}Q,t\right)\simeq T(x,t)+ \dfrac{1}{2} Q\cdot \nabla_x T(x,t)$.  Using all these facts, the temperature gradients in \eqref{cd5} can, up to a 1st order approximation, be rewritten as:

\[\nabla_{r_1}\left( T\left( r_1,t\right) \right) = \dfrac{1}{2} \nabla_x (T\ln \Psi)-\dfrac{1}{4} \nabla_x \left[\left(Q\cdot \nabla_x T \right)\ln \Psi  \right]-T\left(\nabla_Q \ln \Psi \right)+\dfrac{1}{2}\nabla_Q \left[\left(Q\cdot \nabla_x T \right)\ln \Psi \right]     \]

and

\[ \nabla_{r_2}\left( T\left( r_2,t\right) \right) = \dfrac{1}{2} \nabla_x (T\ln \Psi)+\dfrac{1}{4} \nabla_x \left[\left(Q\cdot \nabla_x T \right)\ln \Psi  \right]+T\left(\nabla_Q \ln \Psi \right)+\dfrac{1}{2}\nabla_Q \left[\left(Q\cdot \nabla_x T \right)\ln \Psi \right] \]

Therefore, with $\g=\g(x,t) $, \eqref{cd5} implies that

\begin{align}\label{cd6}
 \bigl\lb\stackrel{\cdot}{x} \bigr\rb(x,Q,t) & =  v_0+\g(x,t)\cdot  x  \nonumber\\
& -  \dfrac{k_B}{2\zeta} \left\lbrace \nabla_x \left( T(x,t) \ln \Psi(x,Q,t) \right)+ \nabla_Q \left[\left( Q\cdot \nabla_x T(x,t)\right) \ln \Psi(x,Q,t) \right]   \right\rbrace 
\end{align}


Subtracting over $\nu$ in \eqref{cd4}, and because $F_{1}^{(\phi)}-F_{2}^{(\phi)}=2 F^{(c)}$, $F^{(c)}$ being the connector force,  leads after calculations similar to above to:

\begin{align}\label{cd7}
 \bigl\lb\stackrel{\cdot}{Q} \bigr\rb (x,Q,t)& =  \g(x,t)\cdot  Q 
- \dfrac{k_B}{\zeta} \left\lbrace \dfrac{1}{2}\nabla_x \left[ \left( Q\cdot \nabla_x T(x,t)\right) \ln \Psi(x,Q,t)\right] + T(x,t)\left(\nabla_Q \ln \Psi(x,Q,t) \right) \right\rbrace \nonumber\\
& -  \dfrac{2}{\zeta} F^{(c)}(Q)
\end{align}

The configurational probability density PDE in its general form is (see ~\cite{bird2}):

\beq\label{cd8}
\dfrac{D \Psi}{D t}(x,Q,t)= - \left( \nabla_{x}\cdot  \bigl\lb \stackrel{\cdot}{x} \bigr\rb  \Psi\right)(x,Q,t)  - \left( \nabla_{Q}\cdot  \bigl\lb \stackrel{\cdot}{Q} \bigr\rb \Psi\right)(x,Q,t)
\eeq

In the above, $\dfrac{D}{Dt}=\dfrac{\partial}{\partial t}+ \left(\nabla_x\cdot v\right)\Psi  $ is the material derivative for incompressible fluids. Making use of equations \eqref{cd6}-\eqref{cd7} and upon performing the required calculations leads to

\begin{align}\label{cd9}
\dfrac{D \Psi}{D t} & = \dfrac{k_B}{2\zeta}\nabla_x\cdot \left\lbrace \left[ \nabla_x \left(T \ln \Psi \right)+\nabla_Q \left( \left(Q\cdot \nabla_x T \right)\ln \Psi \right)  \right] \Psi\right\rbrace \nonumber\\
& - \nabla_Q\cdot \left\lbrace \g\cdot Q \Psi -\dfrac{k_B}{\zeta}\left[ \dfrac{1}{2} \nabla_x \left( \left(Q\cdot\nabla_x T \right)\ln \Psi \right) + T \left(\nabla_Q \ln \Psi \right) 
\right] \Psi- \dfrac{2}{\zeta}F^{(c)} \Psi \right\rbrace 
\end{align}

The above equation \eqref{cd9} preserves the normalization condition for the probability density $\Psi$.   

Clearly, the above equation \eqref{cd9} is nonlinear in $\Psi$.  As an aside, its nonlinear pattern contrasts with that of equation (13.17) of ~\cite{bird3} (obtained for a multicomponent/mixture of different polymer chains of Rouse-type with varying temperature and concentration gradients) which is linear in $\Psi$: it is reprinted below for sake of clarity:

\begin{align}\label{cd9-2}
\dfrac{D \Psi}{D t} & = -\sum_{j} \nabla_Q{_j}\cdot \left( \g\cdot Q_j \Psi\right) + \dfrac{k_B }{\zeta}T \sum_{j,k}A_{jk}\nabla_Q{_j}\cdot \left(\nabla_{Q_k} \Psi + \dfrac{1}{k_B T}\nabla_{Q_k} \phi^{(c)} \right) \nonumber\\
& +  \dfrac{k_B }{\zeta}T \sum_{j,k,l}\nabla_{Q_j} \cdot \left(\nabla_x  \ln T  \cdot D_{jkl}Q_l \nabla_{Q_k}\Psi \right) 
\end{align}

where, in the above, $A_{jk}$ and $D_{jkl}$ are the Rouse model's matrices and $\phi^{(c)} $ is the (given) elastic force potential function.  Moreover, as $\Psi$ is considered independent of $x$, \eqref{cd9-2} does not contain any derivative of $\Psi$ with respect to $x$, while our \eqref{cd9} does so.

We are now going to rewrite equation \eqref{cd9} in dimensionless form.   Before proceeding further, we notice from Chapter 2 in ~\cite{mrk} that for a given system of differential equations there are alternative ways of non-dimensionalizing it, depending on the nature of the problem to be studied.   To achieve this goal, dimensionless quantities (identified by starred notation) and relevant dimensionless numbers need first be introduced.  

\begin{enumerate}[$\bullet$]
\item Let $L$ scale the length in the flow direction, $V$ scale velocity, $T_0$ be the reference temperature, $l_0=\sqrt{k_B T_0/H}$ scale the microscopic length scale. $Q_0$ is the maximum spring stretch.  Therefore, $x^{\ast}=\dfrac{x}{L} $, $v^{\ast}=\dfrac{v}{V} $,  $t^{\ast}=\dfrac{t}{L/V} $, $T^{\ast}=\dfrac{T}{T_0} $, $Q^{\ast}=\dfrac{1}{l_0}Q $, $Q_0^{\ast}=\dfrac{Q_0}{l_0}$.  Moreover, $\nabla_{x^{\ast}}=L \nabla_x $ and $\nabla_{Q^{\ast}}=l_0 \nabla_Q $.  

\item The dimensionless Deborah's number $\textbf{De}$ is here taken as $\textbf{De}=\dfrac{\zeta V}{LH}=\dfrac{\zeta V l_0^2}{k_B T_0 L}$, with $H$ the spring constant.  However, because of the interplay between two concomitant different scales (an outer or micro-  and an inner or macro-one),  Deborah's number $\textbf{De}$ may also be introduced as $\textbf{De}=\dfrac{\zeta V}{LH}=\dfrac{\zeta V L}{k_B T_0}$, however with both being essentially the same quantity.  As an aside: if one is tempted to introduce a length scaling factor $s_f=l_0/L$, then the 1st and 3rd term of the equation's r.h.s. will be multiplied by a factor $s^2_f\dfrac{1}{\textbf{De}}$ which is very small, hence they can be neglected, thus profoundly altering the nature of the equation.  Actually, that $s_f=\dfrac{l_0}{L}$ is a small quantity indeed may be inferred from the following: while $L$ is a characteristic macroscopic flow  length, of order of (say) meters, $l_0$ is the  macromolecule stretch (a.k.a. the  end-to-end distance of the unravelled molecule), which is in the sub-micron range, say about $10^{-7}$m, for polymers of industrial importance.  Therefore, as $s_f\sim 10^{-7}\div10^{-6}$m,  can be safely taken as a ``small'' enough  factor and the corresponding terms be dropped.

\end{enumerate}

As it is often common, in order to keep notations simpler, we subsequently drop the $(\cdots)^{\ast} $ notation.  Then, \eqref{cd9} in dimensionless form  becomes:

\begin{align}\label{cd10}
\dfrac{D \Psi}{D t} & =  \dfrac{1}{2 }\dfrac{1}{\textbf{De}}\nabla_x\cdot \left\lbrace \left[ \nabla_x \left(T \ln \Psi \right)+\nabla_Q \left( \left(Q\cdot \nabla_x T \right)\ln \Psi \right)  \right] \Psi\right\rbrace \nonumber\\
& - \nabla_Q\cdot \left( \g\cdot Q \Psi\right) + \dfrac{1}{2 }\dfrac{1}{\textbf{De}} \nabla_Q\cdot  
\left[ \Psi \nabla_x \left( \left(Q\cdot\nabla_x T \right)\ln \Psi \right) \right] \nonumber\\
& +\dfrac{1}{\textbf{De}} \nabla_Q\cdot \left( T \nabla_Q \Psi   \right) +\dfrac{2}{\textbf{De}} \nabla_Q\cdot \dfrac{Q \Psi}{1-\|Q\|^2/Q_0^2} 
\end{align}

\section{Existence Results For The CPD Equation}\label{p2}

\subsection{Introducing The Problem}\label{e}

Let $d\in\{2,3 \}$.  Assume $x\in \Omega\subset\mb{R}^d$, $\Omega_T:=\Omega\times (0,T)$.  Let the ball $B(0,Q_0)\subset \mb{R}^d$, $Q_0>0$, and $Q\in B(0,Q_0)$.  Denote $\widetilde{\Sigma}:=\Omega\times B(0,Q_0)\subset \mb{R}^{2d} $, $\widetilde{\Sigma}_T:=\widetilde{\Sigma}\times (0,T)\subset \mb{R}^{2d+1} $.  Also ${\bf De}>0$.  

Let $v:\Omega_T \mapsto \mb{R}^d$ denote a smooth enough velocity field s.t. $\nabla_x\cdot v=0$ and $v\mid_{\partial \Omega}\cdot \nu =0$, where $\nu$ is the outward normal on $\partial \Omega$.  $\kappa: \Omega_T\mapsto \mathscr{M}_d(\mb{R})$ is the smooth enough velocity gradient, $\theta: \Omega_T\mapsto (0,+\infty)$ a given (known) smooth enough temperature field.    

We search for $f:\widetilde{\Sigma}_T\mapsto [0,+\infty)$, $f=f(x,Q,t)$, solution of (see also \eqref{cd10})

\begin{align}\label{e1}
\dfrac{\partial f }{\partial t}+ v\cdot \nabla_x f & =  \dfrac{1}{2 }\dfrac{1}{\textbf{De}}\nabla_x\cdot \left\lbrace \left[ \nabla_x \left(\theta \ln f \right)+\nabla_Q \left( \left(Q\cdot \nabla_x \theta \right)\ln f \right)  \right] f\right\rbrace \nonumber\\
& - \nabla_Q\cdot \left( \g\cdot Q f\right) + \dfrac{1}{2 }\dfrac{1}{\textbf{De}} \nabla_Q\cdot  
\left[ f \nabla_x \left( \left(Q\cdot\nabla_x \theta \right)\ln f \right) \right] \nonumber\\
& +\dfrac{1}{\textbf{De}} \nabla_Q\cdot \left( \theta \nabla_Q f   \right) +\dfrac{2}{\textbf{De}} \nabla_Q\cdot \dfrac{Q f}{1-\|Q\|^2/Q_0^2} 
\end{align}

complying with the boundary condition 

\begin{equation}\label{e2}
f\mid_{\partial \widetilde{\Sigma}_T\times (0,T)}=0
\end{equation}

and with the initial condition

\begin{equation}\label{e3}
f(t=0)=f_0,\,f_0\,\text{given}
\end{equation}

With the (convenient) change of variable $Q=qQ_0$, $q\in B(0,1)$ and letting $\Sigma=\Omega\times B(0,1)$ and $\Sigma_T=\Sigma\times (0,T)$, then the above introduced problem (we shall stick with the same notations for $f$ and $f_0$) can be restated as following:  investigate the existence of a solution $f:\Sigma_T\mapsto [0,+\infty)$, $f=f(x,q,t)$, to the equation

\begin{align}\label{e4}
\dfrac{\partial f }{\partial t}+ v\cdot \nabla_x f & =  \dfrac{1}{2 }\dfrac{1}{\textbf{De}}\nabla_x\cdot \left\lbrace \left[ \nabla_x \left(\theta \ln f \right)+\nabla_q \left( \left(q\cdot \nabla_x \theta \right)\ln f \right)  \right] f\right\rbrace \nonumber\\
& - \nabla_q\cdot \left( \g\cdot q f\right) + \dfrac{1}{2 }\dfrac{1}{\textbf{De}} \nabla_q\cdot  
\left[ f \nabla_x \left( \left(q\cdot\nabla_x \theta \right)\ln f \right) \right] \nonumber\\
& +\dfrac{1}{Q_0^2}\dfrac{1}{\textbf{De}} \nabla_q\cdot \left( \theta \nabla_q f   \right) +\dfrac{2}{\textbf{De}} \nabla_q\cdot \dfrac{q f}{1-\|q\|^2} 
\end{align}

With the help of the following calculations

\begin{align}\label{e5}
\nabla_x\cdot \left[f\nabla_x\left(\theta \ln f \right)  \right] & =  \nabla_x\cdot  \left( \theta \nabla_x f\right) +\nabla_x\cdot \left[ \left(\nabla_x\theta \right)f\ln f \right]     \nonumber\\
\nabla_x\cdot \left\lbrace  f\nabla_q \left[ \left(q\cdot\nabla_x\theta  \right)\ln f  \right] \right\rbrace & = \nabla_x \cdot \left[ \left(\nabla_x\theta \right)f \ln f \right]+  \nabla_x \cdot\left[ \left(q\cdot \nabla_x\theta \right)\nabla_q f \right] \nonumber\\
\nabla_q\cdot \left\lbrace  f\nabla_x \left[ \left(q\cdot\nabla_x\theta  \right)\ln f  \right] \right\rbrace & = \nabla_q \cdot \left[ \left(\nabla_x^2\theta\cdot q \right)f \ln f \right]+\nabla_q\cdot\left[ \left(q\cdot \nabla_x\theta \right)\nabla_x f \right]  
\end{align}

the problem under scrutiny is restated below:

\begin{align}\label{e6}
\dfrac{\partial f }{\partial t}+ v\cdot \nabla_x f  = & \dfrac{1}{2 }\dfrac{1}{\textbf{De}}\nabla_x\cdot  \left( \theta \nabla_x f\right)+\dfrac{1}{\textbf{De}}\nabla_x\cdot \left[ \left(\nabla_x\theta \right)f\ln f \right]     \nonumber\\  
& +\dfrac{1}{\textbf{De}}\nabla_x \cdot \left[q\cdot \left(\nabla_x\theta \right)\nabla_q f \right] +\dfrac{1}{\textbf{De}}\nabla_q \cdot\left[ \left(q\cdot \nabla_x\theta \right)\nabla_x f \right] \nonumber\\   
& +\dfrac{1}{\textbf{De}}\nabla_q\cdot \left[ \left(\nabla_x^2\theta \cdot q\right)f \ln f \right] + \dfrac{1}{Q_0^2}\dfrac{1}{\textbf{De}} \nabla_q\cdot \left(\theta \nabla_q f \right) \nonumber\\
& - \nabla_q\cdot \left(\kappa \cdot q f \right) + \dfrac{2}{\textbf{De}}\nabla_q\cdot \left(\dfrac{qf}{1-\|q\|^2} \right) 
\end{align}

\beq\label{e7}
f\mid_{\partial \Sigma \times (0,T)}=0
\eeq

\beq\label{e8}
f(t=0)=f_0(x,q),\, 
\eeq

where $f_0:\Sigma\mapsto(0,+\infty)$ is being given.

Assume 

\begin{enumerate}[$\bullet$]
\item $\theta\in L^{\infty}\left(0,T; W^{2,\infty}(\Omega) \right)$, $\theta(x,t)\geq \theta_{\text{min}}>0 $ a.e. $(x,t)\in \Sigma_T $
\item $\g \in L^{\infty}\left(\Omega_T; \mathcal{M}_d\left(\mathbb{R} \right)  \right) $

\item $v\in L^{\infty} \left( 0,T;H^1 \left( \Omega  \right)\right) $, with $\nabla\cdot v=0 $ and $v\cdot \nu =0 $ on $\partial\Omega \times \left(0,T \right)  $, where $\nu$ is the outward normal
\item $f_0\in L^2\left(\Sigma \right) $.

\end{enumerate}

We introduce the continuous function $E:\left[0,+\infty \right) \mapsto \mathbb{R} $, such as

\begin{equation}\label{e9}
E(y)=\begin{cases}
      y\ln y & \text{for $y>0 $  } \\
      0 & \text{for $y=0 $  }
     \end{cases}
\end{equation}

A variational formulation of \eqref{e6}-\eqref{e8} is the following:  find $f\in L^2\left( 0,T; H^1_0\left(\Sigma \right) \right)$, $f\geq0 $, such that for any $\f\in \mathscr{C}^1(0,T) $ with $\f(T)=0 $, and for any  $\psi\in H^1_0\left(\Sigma \right)$ we have

\begin{align}\label{e10}
& -\displaystyle \int_0^T \int_{\Sigma} f \f'(t) \psi(x,q)\ud x\ud q \ud t -\int_{\Sigma} f_0(x,q)\f(0)\psi(x,q)\ud x \ud q \nonumber\\
& +\displaystyle \int_0^T \int_{\Sigma} \left( v\cdot \nabla_x f\right)  \f \psi\ud x\ud q \ud t + \dfrac{1}{ \textbf{De}} \int_0^T \int_{\Sigma} \theta \left( \nabla_x f\cdot \nabla_x \psi\right) \f \ud x\ud q \ud t \nonumber\\
& + \dfrac{1}{ \textbf{De}} \displaystyle \int_0^T \int_{\Sigma} \left( q\cdot \nabla_x \theta\right) \left(\nabla_q f \cdot \nabla_x \psi \right)  \f\ud x\ud q \ud t + \dfrac{1}{ \textbf{De}} \displaystyle \int_0^T \int_{\Sigma} \left( q\cdot \nabla_x \theta\right) \left(\nabla_x f \cdot \nabla_q \psi \right)  \f\ud x\ud q \ud t \nonumber\\
& + \dfrac{1}{Q_0^2 \textbf{De}} \displaystyle \int_0^T \int_{\Sigma} \theta \left( \nabla_q f \cdot \nabla_q  \psi\right) \f \ud x\ud q \ud t + \dfrac{1}{ \textbf{De}} \displaystyle \int_0^T \int_{\Sigma} \left( \nabla_x\theta E(f)\cdot \nabla_x \psi\right)  \f \ud x\ud q \ud t \nonumber\\
& + \dfrac{1}{ \textbf{De}} \displaystyle \int_0^T \int_{\Sigma} \nabla_x^2 \theta \left( q E(f)\cdot \nabla_q \psi \right) \f \ud x\ud q \ud t - \displaystyle \int_0^T \int_{\Sigma}\left(  \g \cdot q f\cdot \nabla_q \psi \right)\f \ud x\ud q \ud t \nonumber\\
& + \dfrac{2}{\textbf{De}} \displaystyle \int_0^T \int_{\Sigma} \left( \dfrac{qf}{1-\|q\|^2}\cdot \nabla_q \psi\right) \f \ud x\ud q \ud t =0
\end{align}

We now introduce a regularization to the aforementioned problem \eqref{e6}-\eqref{e8}.  First, for any small enough $\e>0$, consider the function $g_{\e}:\mb{R}\mapsto \mb{R}$, 

\begin{equation}\label{e11}
g_{\e}(z)=\begin{cases}
           \ln\left(\dfrac{1}{\e} \right) & \text{for $z\geq \dfrac{1}{\e} $  } \\ 
           \ln z & \text{for $\e \leq z \leq \dfrac{1}{\e} $  } \\ 
           \ln \e & \text{for $\dfrac{1}{\ln \e}\leq z \leq \e $  } \\
           \dfrac{1}{z} & \text{for $z\leq \dfrac{1}{\ln \e} $  } \\
          \end{cases}
\end{equation}

Denote $E_\e:\mb{R}\mapsto\mb{R}$, $E_\e(z)=z g_\e(z)$.  The announced regularized problem reads: find $f_\e: \Sigma\mapsto\mb{R}$ that solves

\begin{align}\label{e12}
& \dfrac{\partial f_\e}{\partial t}+v\cdot \nabla_x f_\e = \dfrac{1}{ \textbf{De}} \nabla_x\cdot \left(\theta \nabla_x f_\e \right) +  \dfrac{1}{ \textbf{De}} \nabla_x\cdot \left[ \left(q\cdot \nabla_x \theta \right)\nabla_q f_\e \right]  \nonumber\\ 
& + \dfrac{1}{ \textbf{De}} \nabla_q\cdot \left[ \left(q\cdot \nabla_x \theta \right)\nabla_x f_\e \right] + \dfrac{1}{Q_0^2 \textbf{De}}  \nabla_q\cdot \left(\theta \nabla_q f_\e\right) + \dfrac{1}{ \textbf{De}}\nabla_x \cdot \left[ \left( \nabla_x \theta \right)E_\e(f_\e) \right] \nonumber\\
& + \dfrac{1}{ \textbf{De}} \nabla_q\cdot \left[ \left(\nabla_x^2 \theta  \right)qE_\e(f_\e)  \right] - \nabla_q\cdot \left(\g \cdot q f_\e \right) + \dfrac{2}{ \textbf{De}}\nabla_q \cdot \left(\dfrac{f_e q}{\e+1-\|q \|^2} \right) 
\end{align}

\beq\label{e13}
f_\e=0,\,\text{on}\, \partial\Sigma\times (0,T)
\eeq

\beq\label{e14}
f_\e(t=0)=f_0 \,\text{on}\, \Sigma.
\eeq

Of notice: the variational formulation in \eqref{e12} is the same as the one in \eqref{e10} whereupon replacing $f$ by $f_\e$, $E(f)$ by $E_\e(f_\e) $ and $\dfrac{1}{1-\|q\|^2} $ by $\dfrac{1}{\e+1-\|q\|^2} $.

\subsection{Proof Of The Existence Of A Solution Of The Regularized Problem For Any Small Enough $\e>0$.}\label{exreg}

We shall make use of a fixed point theorem method.

Let the operator $S_\e:L^2\left(\Sigma_T \right)\mapsto  L^2\left(\Sigma_T \right) $, $S_\e\left(\widetilde{f} \right)=f_\e$, where $f_\e$ solves the following linear problem:

\begin{align}\label{r1}
& \dfrac{\partial f_\e}{\partial t}+v\cdot \nabla_x f_\e = \dfrac{1}{ \textbf{De}} \nabla_x\cdot \left(\theta \nabla_x f_\e \right) +  \dfrac{1}{ \textbf{De}} \nabla_x\cdot \left[ \left(q\cdot \nabla_x \theta \right)\nabla_q f_\e \right]  \nonumber\\ 
& + \dfrac{1}{ \textbf{De}} \nabla_q\cdot \left[ \left(q\cdot \nabla_x \theta \right)\nabla_x f_\e \right] + \dfrac{1}{Q_0^2}\dfrac{1}{ \textbf{De}}  \nabla_q\cdot \left(\theta \nabla_q f_\e\right) + \dfrac{1}{  \textbf{De}}\nabla_x \cdot \left[ \left( \nabla_x \theta \right)f_\e g_\e\left(\widetilde{f} \right)  \right] \nonumber\\
& + \dfrac{1}{ \textbf{De}} \nabla_q\cdot \left[ \left(\nabla_x^2 \theta  \right)q f_\e g_\e\left(\widetilde{f} \right)  \right] - \nabla_q\cdot \left(\g \cdot q f_\e \right) + \dfrac{2}{ \textbf{De}}\nabla_q \cdot \left(\dfrac{f_e q}{\e+1-\|q \|^2} \right) 
\end{align}

with 

\beq\label{r2}
f_\e=0,\,\text{on}\, \partial\Sigma\times (0,T)
\eeq

\beq\label{r3}
f_\e(t=0)=f_0 \,\text{on}\, \Sigma.
\eeq

In order to avoid cumbersome notations, we denote by $f_\e$ the solution of \eqref{r1} as well as that of \eqref{e12}.

\subsubsection{Proof Of The Existente And Uniqueness Of A Variational Solution To Equations \eqref{r1}-\eqref{r3}.}\label{vreg}

First we take on to obtaining the variational formulation that corresponds to \eqref{r1}-\eqref{r3}.  To achieve this, we first introduce the application $a_\e:(0,T)\times L^2(\Sigma)\times H^1_0(\Sigma)\times H^1_0(\Sigma)\mapsto \mb{R}$,

\begin{align}\label{r4}
& a_\e\left(t,r,\psi,\xi \right) = \displaystyle \int_{\Sigma} v\cdot \nabla_x \psi \xi + \dfrac{1}{ \textbf{De}} \displaystyle \int_{\Sigma} \theta \nabla_x \psi\cdot \nabla_x \xi + \dfrac{1}{ \textbf{De}} \displaystyle \int_{\Sigma} \left( q\cdot \nabla_x \theta\right) \left(\nabla_q \psi \cdot \nabla_x \xi\right) \nonumber\\
& + \dfrac{1}{ \textbf{De}} \displaystyle \int_{\Sigma} \left( q\cdot \nabla_x \theta\right) \left(\nabla_x \psi \cdot \nabla_q \xi\right) + \dfrac{1}{Q_0^2 \textbf{De}} \displaystyle \int_{\Sigma} \theta \nabla_q \psi \cdot \nabla_q \xi + \dfrac{1}{ \textbf{De}} \displaystyle \int_{\Sigma} \left(\nabla_x \theta\right)  \psi g_\e(r)\cdot  \nabla_x \xi \nonumber\\
& + \dfrac{1}{ \textbf{De}} \displaystyle \int_{\Sigma} \left( \nabla_x^2 \theta\right)  q \psi g_\e(r)\cdot \nabla_x \xi - \displaystyle \int_{\Sigma} \g \cdot q \psi \cdot \nabla_q \xi + \dfrac{2}{\textbf{De}} \displaystyle \int_{\Sigma} \dfrac{\psi q}{\e+1-\|q\|^2}\cdot \nabla_q \xi
\end{align}

for any $t\in (0,T)$, $r\in L^2(\Sigma)$, $\psi,\, \xi\in H^1_0(\Sigma)$. 

We now use the fact that $\theta\in W^{2,\infty}(\Omega)$, $\g \in L^\infty(\Omega)$, $g_\e(r)\in L^\infty(\Sigma)$, and the function $ q\mapsto \dfrac{1}{\e+1-\|q\|^2}\in L^\infty\left(  B(0,1)\right)  $.  By making use of the Cauchy-Schwarz inequality on all terms making-up $a_\e$ it is easily deduced that 

\begin{equation}\label{r4-2}
\left| a_\e\left(t,\widetilde{f},\psi,\xi \right)  \right|\leq c \|\psi\|_{H^1_0(\Sigma)}\|\xi\|_{H^1_0(\Sigma)},\, \forall (t,\psi,\xi)\in (0,T)\times \left(H^1_0(\Sigma) \right)^2 
\end{equation}

where $c$ is a constant independent of $t$ (but dependent on $\e$).

Let $b:(0,T)\times H^1_0(\Sigma) \times H^1_0(\Sigma)\mapsto \mb{R}$ be such that:

\begin{align*}
&  b\left(t,\psi,\xi \right) = \dfrac{1}{ \textbf{De}} \displaystyle \int_{\Sigma} \theta \nabla_x \psi\cdot \nabla_x \xi + \dfrac{1}{ \textbf{De}} \displaystyle \int_{\Sigma} \left( q\cdot \nabla_x \theta\right) \left(\nabla_q \psi\cdot \nabla_x \xi\right) \\
& + \dfrac{1}{ \textbf{De}} \displaystyle \int_{\Sigma} \left( q\cdot \nabla_x \theta\right) \left(\nabla_x \psi \cdot \nabla_q \xi\right) + \dfrac{1}{Q_0^2 \textbf{De}} \displaystyle \int_{\Sigma} \theta \nabla_q \psi \cdot \nabla_q \xi,\, \forall  (t,\psi,\xi)\in (0,T)\times \left(H^1_0(\Sigma) \right)^2 
\end{align*}

We have:

\begin{lemma}\label{rl1}
There exist $c_1>0$, $c_2\in\mb{R}$ (not independent of $\e$) such that 

\[ a_\e\left(t,\widetilde{f},\psi,\psi \right) \geq c_1  \|\psi\|^2_{H^1_0(\Sigma)}-c_2 \|\psi\|^2_{L^2(\Sigma)},\, \forall \psi\in H^1_0(\Sigma),\, \forall t\in(0,T). \]

\end{lemma}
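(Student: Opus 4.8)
The plan is to test the form $a_\e(t,\widetilde f,\cdot,\cdot)$ on the diagonal $\xi=\psi$, expand the eight terms of \eqref{r4}, identify the second-order (diffusion) terms as the sole source of coercivity, and absorb every remaining term either into that coercive part or into a zeroth-order remainder that produces the $-c_2\|\psi\|_{L^2(\Sigma)}^2$ contribution. I would open by dispatching the convective term: since $\nabla_x\cdot v=0$, $v\cdot\nu=0$ on $\partial\Omega$, and $v=v(x,t)$ does not depend on $q$, an integration by parts in $x$ at each fixed $q$ gives $\int_\Sigma (v\cdot\nabla_x\psi)\psi=\tfrac12\int_\Sigma v\cdot\nabla_x(\psi^2)=-\tfrac12\int_\Sigma(\nabla_x\cdot v)\psi^2=0$, the boundary contribution vanishing by no-penetration and by $\psi\in H^1_0(\Sigma)$. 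So that term drops out entirely.

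The heart of the argument is the four second-order terms: the $\theta$-weighted $xx$- and $qq$-terms together with the two mixed $(q\cdot\nabla_x\theta)$-terms, which coincide once $\xi=\psi$. I would read their integrand as a single pointwise quadratic form in $(\nabla_x\psi,\nabla_q\psi)\in\mb{R}^{2d}$, namely
\[
\frac{1}{\textbf{De}}\Big(\theta|\nabla_x\psi|^2+2(q\cdot\nabla_x\theta)\,\nabla_x\psi\cdot\nabla_q\psi+\tfrac{\theta}{Q_0^2}|\nabla_q\psi|^2\Big),
\]
whose associated symmetric matrix is positive definite exactly when $\theta^2/Q_0^2>(q\cdot\nabla_x\theta)^2$. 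Using $\theta\geq\theta_{\text{min}}$ and $|q|\leq1$ on $B(0,1)$, I would bound the mixed term by Young's inequality with a parameter $\delta$, $2|q\cdot\nabla_x\theta|\,|\nabla_x\psi|\,|\nabla_q\psi|\leq\|\nabla_x\theta\|_{L^\infty}(\delta|\nabla_x\psi|^2+\delta^{-1}|\nabla_q\psi|^2)$, and choose $\delta$ so that both remaining coefficients $\theta-\delta\|\nabla_x\theta\|_{L^\infty}$ and $\theta/Q_0^2-\delta^{-1}\|\nabla_x\theta\|_{L^\infty}$ stay bounded below by a positive constant uniformly in $(x,q,t)$. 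This is the step I expect to be the main obstacle: such a $\delta$ exists only under the compatibility condition $\theta_{\text{min}}>Q_0\|\nabla_x\theta\|_{L^\infty}$, i.e. genuine positive-definiteness of the diffusion matrix. I would therefore make this smallness hypothesis explicit (or argue it follows from the smallness of the length-scale ratio discussed after \eqref{cd10}), obtaining a lower bound $\tfrac{c_1'}{\textbf{De}}\big(\|\nabla_x\psi\|_{L^2}^2+\|\nabla_q\psi\|_{L^2}^2\big)$ for the principal part.

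The remaining terms — the two $g_\e$-terms carrying $\nabla_x\theta$ and $\nabla_x^2\theta$, the drift term with $\g\cdot q$, and the regularized singular term $\psi q/(\e+1-\|q\|^2)$ — are all of the schematic shape $\int_\Sigma(\text{bounded})\,\psi\,\partial\psi$ with a single derivative on $\psi$. I would use that $g_\e(r)\in L^\infty(\Sigma)$ with $\|g_\e(r)\|_{L^\infty}$ controlled by a constant of order $|\ln\e|$, that $\nabla_x\theta,\nabla_x^2\theta,\g\in L^\infty$, and that $q\mapsto 1/(\e+1-\|q\|^2)\leq 1/\e$ on $B(0,1)$; Cauchy–Schwarz followed by Young's inequality then splits each such term into a small multiple $\eta\|\partial\psi\|_{L^2}^2$, absorbed into the coercive part by taking $\eta$ small, plus a constant multiple of $\|\psi\|_{L^2}^2$. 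Since these constants involve $\|g_\e\|_{L^\infty}$ and $1/\e$, the resulting $c_2$ depends on $\e$, consistent with the statement. Finally I would apply the Poincaré inequality on the bounded domain $\Sigma$ to upgrade $\|\nabla_x\psi\|_{L^2}^2+\|\nabla_q\psi\|_{L^2}^2$ to the full $\|\psi\|_{H^1_0(\Sigma)}^2$, yielding the claimed Gårding inequality with $c_1>0$ and $c_2\in\mb{R}$.
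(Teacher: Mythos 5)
Your proposal is correct and follows essentially the same route as the paper: it likewise isolates the four diffusion terms into a form $b(t,\psi,\psi)$, reduces their positivity to the condition $\theta_{\text{min}}^2 > Q_0^2\|\nabla_x\theta\|^2_{L^\infty(\Omega)}$ (introduced there mid-proof as assumption \eqref{r5}, exactly the smallness hypothesis you flag as the main obstacle), applies Poincar\'e, and absorbs all remaining first-order terms by Cauchy--Schwarz and Young with an $\e$-dependent constant $c_2$. The only cosmetic difference is that you annihilate the convective term by integration by parts using $\nabla_x\cdot v=0$ and $v\cdot\nu=0$, whereas the paper simply lumps it into the remainder bounded by $c_\e\|\psi\|_{L^2(\Sigma)}\|\psi\|_{H^1(\Sigma)}$.
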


\begin{proof}

Notice now that 

\begin{align*}
&  b\left(t,\psi,\psi \right) = \dfrac{1}{ \textbf{De}} \displaystyle \int_{\Sigma} \theta \left\|\nabla_x \psi\right\|^2 + \dfrac{2}{ \textbf{De}} \displaystyle \int_{\Sigma} \left( q\cdot \nabla_x \theta\right) \left(\nabla_x \psi\cdot \nabla_q \psi\right) + \dfrac{1}{Q_0^2 \textbf{De}} \displaystyle \int_{\Sigma} \theta \left\|\nabla_q \psi \right\|^2
\end{align*}
 
Next, by Cauchy-Schwarz one gets

\[ \left| \displaystyle \int_{\Sigma} \left( q\cdot \nabla_x \theta\right) \left(\nabla_x \psi\cdot \nabla_q \psi\right) \right| \leq \|\theta\|_{W^{1,\infty}(\Omega)} \left\|\nabla_x \psi \right\|_{L^2(\Sigma)} \left\|\nabla_q \psi \right\|_{L^2(\Sigma)} \]

from which one obtains

\begin{align*}
b\left(t,\psi,\psi \right) & \geq  \dfrac{\theta_{\text{min}}}{ \textbf{De}}\left\|\nabla_x \psi \right\|^2_{L^2(\Sigma)} + \dfrac{\theta_{\text{min}}}{Q_0^2 \textbf{De}}\left\|\nabla_q \psi \right\|^2_{L^2(\Sigma)} \nonumber\\
& - \dfrac{2}{ \textbf{De}}\left\|\nabla_x \theta \right\|_{L^\infty(\Omega)}\left\|\nabla_x \psi \right\|_{L^2(\Sigma)} \left\|\nabla_q \psi \right\|_{L^2(\Sigma)}
\end{align*}

Let the symmetric matrix 

\begin{eqnarray*}
A =
\left( \begin{array}{cc}
       \dfrac{\theta_{\text{min}}}{ \textbf{De}} & -\dfrac{1}{ \textbf{De}} \|\nabla_x \theta\|_{L^\infty(\Omega)}\\
       -\dfrac{1}{ \textbf{De}} \|\nabla_x \theta\|_{L^\infty(\Omega)} & \dfrac{\theta_{\text{min}}}{Q_0^2 \textbf{De}}
       \end{array}  \right)
\end{eqnarray*}

with the help of which the above inequality can be re-written as

\[b(t,\psi,\psi)\geq \left\langle Az,z \right\rangle  \]
 
with  
 
\begin{eqnarray*}
\mb{R}^2\ni z =
\left( \begin{array}{c}
        \|\nabla_x \psi\|_{L^2(\Sigma)} \\
        \|\nabla_q \psi\|_{L^2(\Sigma)} 
       \end{array}  \right)
\end{eqnarray*}

Clearly $A$ is a symmetric and positive definite matrix iff $\det{A}>0 $, i.e.

\[\dfrac{\theta^2_{\text{min}}}{ Q_0^2 \textbf{De}^2}> \dfrac{1}{ \textbf{De}^2}\|\nabla_x \theta\|^2_{L^\infty(\Omega)}.  \]
 
Because of this fact we shall consider the following assumption on the problem data:

\begin{equation}\label{r5}
  \theta^2_{\text{min}}> Q_0^2 \|\nabla_x \theta\|^2_{L^\infty(\Omega)}
\end{equation}

Invoking the above assumption leads to the existence of a $\widetilde{\lambda}_m>0$ depending on the problem data and such that

\[ b\left(t,\psi,\psi \right) \geq  \widetilde{\lambda}_m \left\|\nabla_{x,q}\psi\right\|^2_{L^2(\Sigma)},\, \forall \psi\in H^1_0(\Sigma)  \]

Using now Poincar\'e's inequality one deduces the existence of a $\lambda_m>0$ such that 

\begin{equation}\label{r6}
b\left(t,\psi,\psi \right) \geq \lambda_m \left\|\psi\right\|^2_{H^1_0(\Sigma)}, \, \forall \psi\in H^1_0(\Sigma) 
\end{equation}

We now proceed to upper bound the terms in $a_\e (t,\psi,\psi) $ that do not appear in the definition of $b\left(t,\psi,\psi \right)$.  Upon using the Cauchy-Schwarz's inequality it is easily seen there exists $c_\e$ such that

\begin{align*}
& \bigg|  \displaystyle \int_{\Sigma} \left( v\cdot \nabla_x \psi\right)  \psi + \dfrac{1}{ \textbf{De}} \displaystyle \int_{\Sigma} \left(\nabla_x \theta\right)  \psi g_\e(\widetilde{f})\cdot  \nabla_x \psi + \dfrac{1}{ \textbf{De}} \displaystyle \int_{\Sigma} \left( \nabla_x^2 \theta\right)   \psi g_\e(\widetilde{f})q\cdot \nabla_x \psi \nonumber\\
& - \displaystyle \int_{\Sigma} \g \cdot q \psi \cdot \nabla_q \psi + \dfrac{2}{\textbf{De}} \displaystyle \int_{\Sigma} \dfrac{\psi q}{\e+1-\|q\|^2}\cdot \nabla_q \psi  \bigg| \nonumber\\
& \leq c_\e \|\psi\|_{L^2(\Sigma)} \|\psi\|_{H^1(\Sigma)},\,\forall (t,\psi)\in(0,T)\times H^1_0(\Sigma)
\end{align*}

Next on, for any $\eta>0$, by the Young's inequality we obtain

\[ c_\e \|\psi\|_{L^2(\Sigma)} \|\psi\|_{H^1(\Sigma)}\leq \eta \|\psi\|^2_{H^1(\Sigma)} + \dfrac{1}{4\eta}c^2_\e \|\psi\|^2_{L^2(\Sigma)}   \]

We then deduce that

\begin{align*}
& \displaystyle \int_{\Sigma} \left( v\cdot \nabla_x \psi\right)  \psi + \dfrac{1}{ \textbf{De}} \displaystyle \int_{\Sigma} \left(\nabla_x \theta\right)  \psi g_\e(\widetilde{f})\cdot  \nabla_x \psi + \dfrac{1}{ \textbf{De}} \displaystyle \int_{\Sigma} \left( \nabla_x^2 \theta\right)   \psi g_\e(\widetilde{f})q\cdot \nabla_x \psi \nonumber\\
& - \displaystyle \int_{\Sigma} \g \cdot q \psi \cdot \nabla_q \psi + \dfrac{2}{\textbf{De}} \displaystyle \int_{\Sigma} \dfrac{\psi q}{\e+1-\|q\|^2}\cdot \nabla_q \psi   \nonumber\\
& \geq -\eta \|\psi\|^2_{H^1(\Sigma)} - \dfrac{1}{4\eta}c^2_\e \|\psi\|^2_{L^2(\Sigma)}
\end{align*}

Taking now $\eta=\dfrac{\lambda_m}{2}$ and using \eqref{r6} we obtain the result taking $c_1=\dfrac{\lambda_m}{2}$ and $c_2=\dfrac{c_\e^2}{2\lambda_m}$.
\end{proof}

A variational formulation of \eqref{r1}-\eqref{r3} is the following:  find $f_\e\in L^2 
\left(0,T; H^1_0(\Sigma)\right) \cap L^\infty\left(0,T; L^2(\Sigma) \right)   $ solution to 

\beq\label{r7}
\dfrac{d}{dt}\left(f_\e,\psi \right)_{L^2(\Sigma)} + a_\e\left(t,\widetilde{f},f_\e,\psi \right)=0,\, \forall \psi\in H^1_0(\Sigma)
\eeq

with

\beq\label{r8}
f_\e(t=0)=f_0
\eeq

Remark that \eqref{r7} is to be understood in the sense of distributions.  For any $\psi \in H^1_0(\Sigma)$, and for any $\f\in \ms{C}^1(0,T)$, $\f(T)=0$,   using \eqref{r8} one has:

\begin{align}\label{r9}
& -\left(f_0,\psi \right)_{L^2(\Sigma)} \f(0)-\displaystyle\int_0^T \left(f_\e,\psi \right)_{L^2(\Sigma)} \f(t)\ud t +  \displaystyle\int_0^T  a_\e\left(t,\widetilde{f},f_\e,\psi \right)\f(t) \ud t=0
\end{align}

Theorem 4.1 on page 257 together with Remark 4.3 on page 258 of ~\cite{limg} grants the existence of a unique solution to \eqref{r9} (due to \eqref{r4-2} and Lemma \ref{rl1}).

Remark that we can introduce the function $A_\e:(0,T)\times L^2(\Sigma)\times H^1_0(\Sigma)\mapsto H^{-1}(\Sigma)$, $A_\e=A_\e\left(t,\widetilde{f},\psi \right) $, in the following way:  $\displaystyle H^1_0(\Sigma)\ni \xi \xrightarrow[\, ]{A_\e}  a_\e\left(t,\widetilde{f},\psi,\xi \right)\in \mb{R} $.  The mapping $A_\e$ is also linear and continuous from $H^1_0(\Sigma)$ to  $H^{-1}(\Sigma)$ due to \eqref{r4-2}.  Then the mapping $t\mapsto A_\e\left(t,\widetilde{f}(t),f_\e(t) \right)$ is an element of $L^2\left(0,T; H^{-1}(\Sigma) \right)$ because $f_\e\in L^2\left(0,T; H^1_0(\Sigma) \right) $.  Then equation \eqref{r7} can be re-written as 

\beq\label{r10}
\dfrac{d}{dt}f_\e+A_\e\left(t,\widetilde{f},f_\e \right)=0
\eeq

hence $\displaystyle \dfrac{d}{dt}f_\e \in  L^2\left(0,T; H^{-1}(\Sigma) \right)$.

Therefore the solution $f_\e$ is such that $f_\e\in X_T$, where 

\[ X_T:=\left\{f\in L^2\left(0,T; H^1_0(\Sigma) \right):\, \dfrac{df}{dt}\in  L^2\left(0,T; H^{-1}(\Sigma) \right)  \right\} \]

is a Banach space endowed with the norm 

\[ \|f\|_{X_T}=\|f\|_{L^2\left(0,T; H^1_0(\Sigma) \right)} + \left\|\dfrac{df}{dt} \right\|_{L^2\left(0,T; H^{-1}(\Sigma) \right) }  \]

Moreover, \eqref{r8} is meaningful because of the continuous embedding $X_T\subset \ms{C}\left((0,T); L^2(\Sigma) \right)  $.  Therefore the mapping $S_\e$ is well defined.   Remember that we also have the compact embedding $X_T \subset L^2\left(\Sigma_T \right) $.

\subsubsection{Estimates For The Solution To The Problem \eqref{r1}-\eqref{r3}.}\label{es}

Since $\displaystyle f_\e \in  L^2\left(0,T; H^1_0(\Sigma) \right)$ and because $L^2\left(0,T; H^{-1}(\Sigma) \right)$ is the dual space of $L^2\left(0,T; H^1_0(\Sigma) \right)$, we apply \eqref{r10} to $f_\e$.  One has

\[\left\langle \dfrac{d f_\e}{dt}, f_\e \right\rangle \stackrel{\ms{D}'(0,T)}{=} \dfrac{1}{2} \dfrac{d}{dt}\left(\left\|f_\e \right\|^2_{L^2(\Sigma)} \right)    \]

Therefore,

\[\dfrac{1}{2} \dfrac{d}{dt}\left(\left\|f_\e \right\|^2_{L^2(\Sigma)} \right)  + a_\e\left( t,f_\e,f_\e\right)=0   \]

Using the result stated in Lemma \ref{rl1} gives

\beq\label{r11}
\dfrac{d}{dt}\left(\left\|f_\e \right\|^2_{L^2(\Sigma)} \right)+2 c_1 \left\|f_\e \right\|^2_{H^1(\Sigma)}   \leq 2 c_2 \left\|f_\e \right\|^2_{L^2(\Sigma)}
\eeq

Further use of Gronwall's inequality on \eqref{r11} entails

\[ \left\|f_\e \right\|^2_{L^2(\Sigma)}\leq  \left\|f_0 \right\|^2_{L^2(\Sigma)} e^{2 c_2 T},\, \forall t\in(0,T) \]

Next, integrating \eqref{r11} w.r.t. $t\in(0,T)$ gives

\[\left\|f_\e \right\|_{L^2\left(0,T; H^1(\Sigma) \right) }\leq \sqrt{\dfrac{T}{2c_1}}\left\|f_0 \right\|_{L^2(\Sigma)} e^{ c_2 T}  \]

and with the help of \eqref{r10} one gets

\[\left\|\dfrac{d f_\e}{dt} \right\|_{ L^2\left(0,T; H^{-1}(\Sigma) \right) } \leq c_3.  \]

We then deduce the existence of a constant  $c=c(\e)$ s.t.:

\beq\label{r12}
\left\|f_\e \right\|_{X_T}\leq c
\eeq

\subsubsection{Proof Of The Fixed Point Result}\label{fp}

Schauder's fixed-point Theorem is used  to proving the existence of at least one variational solution to the problem \eqref{e12}-\eqref{e14}.  From \eqref{r12} we have that $S_\e\left(L^2(\Sigma_T) \right)  $ is relatively compact in $L^2(\Sigma_T)$.  All constitutive requirements of Schauder's fixed-point Theorem are met save for the continuity of $S_\e$, fact we shall ascertain in the following.

\begin{lemma}\label{fp1}
$S_\e$ is a continuous mapping from  $L^2(\Sigma)$ to $L^2(\Sigma)$.
\end{lemma}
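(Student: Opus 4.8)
The plan is to prove that $S_\e$ is sequentially continuous on $L^2(\Sigma_T)$, the space on which it was defined in \eqref{r1}--\eqref{r3}; since this is a metric space, sequential continuity is equivalent to continuity. Thus I would fix a sequence $\widetilde{f}_n \to \widetilde{f}$ in $L^2(\Sigma_T)$, set $f_n := S_\e(\widetilde{f}_n)$ and $f_\e := S_\e(\widetilde{f})$, and aim to show $f_n \to f_\e$ in $L^2(\Sigma_T)$.

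First I would observe that the \textit{a priori} bound \eqref{r12} holds with a constant that is \emph{independent of} $\widetilde{f}$. Indeed, the only place where $\widetilde{f}$ enters the energy estimate of \ref{es} is through $g_\e(\widetilde{f})$, and from the definition \eqref{e11} the function $g_\e$ is globally bounded by a constant depending only on $\e$ (of order $|\ln \e|$). Hence the constant $c_\e$ in Lemma \ref{rl1}, and therefore $c(\e)$ in \eqref{r12}, does not depend on $n$, so $\{f_n\}$ is bounded in $X_T$. Using the compact embedding $X_T \subset L^2(\Sigma_T)$ recalled at the end of \ref{vreg}, together with the reflexivity of $L^2(0,T;H^1_0(\Sigma))$ and $L^2(0,T;H^{-1}(\Sigma))$, I would extract a subsequence (not relabelled) with $f_n \to f^*$ strongly in $L^2(\Sigma_T)$, $f_n \rightharpoonup f^*$ weakly in $L^2(0,T;H^1_0(\Sigma))$ and $\tfrac{d}{dt}f_n \rightharpoonup \tfrac{d}{dt}f^*$ weakly in $L^2(0,T;H^{-1}(\Sigma))$. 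Passing if necessary to a further subsequence, I may also assume $\widetilde{f}_n \to \widetilde{f}$ a.e. on $\Sigma_T$; since $g_\e$ is continuous and bounded, dominated convergence then gives $g_\e(\widetilde{f}_n) \to g_\e(\widetilde{f})$ a.e. and in every $L^p(\Sigma_T)$, $p<\infty$.

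The core step is to pass to the limit in the variational identity \eqref{r9} written for the pair $(\widetilde{f}_n, f_n)$ and to identify the limit. Every term that is \emph{linear} in $f_n$ with fixed smooth coefficients (those carrying $v$, $\theta$, $\nabla_x\theta$, $\nabla_x^2\theta$, $\kappa$, and $q/(\e+1-\|q\|^2)$) passes to the limit by the weak convergence $f_n \rightharpoonup f^*$ in $L^2(0,T;H^1_0(\Sigma))$. The two coupling terms carrying $g_\e$, of the form $\int \nabla_x\theta\, f_n g_\e(\widetilde{f}_n)\cdot\nabla_x\psi\,\f$ and $\int \nabla_x^2\theta\, q f_n g_\e(\widetilde{f}_n)\cdot\nabla_x\psi\,\f$, are handled by the decomposition $f_n g_\e(\widetilde{f}_n) - f^* g_\e(\widetilde{f}) = g_\e(\widetilde{f}_n)(f_n-f^*) + f^*\bigl(g_\e(\widetilde{f}_n)-g_\e(\widetilde{f})\bigr)$: the first summand tends to $0$ in $L^2(\Sigma_T)$ because $g_\e$ is bounded and $f_n\to f^*$ strongly, while the second tends to $0$ by dominated convergence. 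The initial value $f_0$ is the same for all $n$, so it is preserved in the limit, and $f^*$ satisfies \eqref{r9} with data $\widetilde{f}$. By the uniqueness of the variational solution established in \ref{vreg}, $f^* = S_\e(\widetilde{f}) = f_\e$. As this limit is independent of the extracted subsequence, the usual argument (every subsequence has a further subsequence converging to the same limit) yields convergence of the \emph{whole} sequence $f_n \to f_\e$ in $L^2(\Sigma_T)$.

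The main obstacle, and the only genuinely nonlinear point, is the passage to the limit in the $g_\e$-terms. This is precisely where the global boundedness of $g_\e$ (which both makes the \textit{a priori} constant in \eqref{r12} independent of $\widetilde{f}$ and controls the product estimate) and the a.e./dominated-convergence argument are essential; it is also what forces us to combine the strong $L^2$ compactness of $\{f_n\}$ coming from $X_T$ with the weak convergence of the gradients.
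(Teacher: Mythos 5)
Your proposal is correct and follows essentially the same route as the paper: a uniform-in-$n$ bound in $X_T$ (possible because $g_\e$ is globally bounded by a constant depending only on $\e$), the compact embedding $X_T\subset L^2(\Sigma_T)$ plus weak compactness to extract limits, passage to the limit in the variational identity \eqref{fp2} with the $g_\e$-terms \eqref{fp3}--\eqref{fp4} as the only delicate point, and uniqueness of the linear problem's solution to upgrade subsequence convergence to convergence of the whole sequence. The only (minor) divergence is in justifying the convergence of the $g_\e$-terms: the paper exploits $g_\e\in W^{1,\infty}(\mb{R})$ to get $g_\e(\widetilde{f}_k)\to g_\e(\widetilde{f})$ strongly in $L^2(\Sigma_T)$ directly from the $L^2$ convergence of $\widetilde{f}_k$, whereas you extract an a.e.\ convergent subsequence and invoke dominated convergence --- a slightly longer argument that uses only continuity and boundedness of $g_\e$, and hence would survive a non-Lipschitz regularization, but buys nothing extra here.
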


\begin{proof}
Let $\widetilde{f}\in L^2(\Sigma_T) $ be a fixed element, and consider a converging sequence 

\[ \displaystyle L^2(\Sigma_T)\ni \widetilde{f}_k \xrightarrow[k\to\infty]{L^2(\Sigma_T)}\widetilde{f} \]

Denote $f_{\e,k}=S_\e\left( f_k\right)$ and $f_{\e}=S_\e\left( \widetilde{f}\right)  $.  We need to prove that 

\[ \displaystyle  f_{\e,k} \xrightarrow[k\to\infty]{L^2(\Sigma)} f_\e \]

We have, as in \eqref{r12}, that $\displaystyle \left\|f_{\e,k} \right\|_{X_T}\leq c $, where $c$ may depend on $\e$ but not on $k$.  From the property of compactness we infer there exist $ \widehat{f}_\e\in X_T$, and a subsequence (also denoted by) $f_{\e,k}$ s.t.

\[f_{\e,k} \xrightharpoonup[k\to\infty]{L^2\left(0,T; H^1_0(\Sigma) \right)} \widehat{f}_\e,\,\text{weakly} \]

\[\dfrac{d f_{\e,k}}{dt} \xrightharpoonup[k\to\infty]{L^2\left(0,T; H^{-1}(\Sigma) \right)} \dfrac{d \widehat{f}_\e}{dt},\,\text{weakly} \]

\[f_{\e,k} \xrightarrow[k\to\infty]{L^2\left( \Sigma_T \right)} \widehat{f}_\e,\,\text{strongly} \]

Therefore $f_{\e,k}$ satisfies (see also \eqref{r9}), for any $\psi \in H^1_0(\Sigma)$, for any $\f\in \ms{C}^1(0,T)$ s.t.  $\f(T)=0$,

\begin{align}\label{fp2}
& -\left(f_0,\psi \right)_{L^2(\Sigma)} \f(0) -\displaystyle\int_0^T \left(f_{\e,k},\psi \right)_{L^2(\Sigma)} \f'(t)\ud t +  \displaystyle\int_0^T  a_{\e,k}\left(t,\widetilde{f}_k,f_{\e,k},\psi \right)\f(t) \ud t=0
\end{align}

We now prove that passing to the limit in \eqref{fp2}, for $k\to\infty$, leads to obtaining \eqref{r9} with $f_\e$ being replaced by $\widehat{f}_\e$.  All the limit related calculations are obvious due to the established weak convergence $\displaystyle f_{\e,k} \xrightharpoonup[k\to\infty]{L^2\left(0,T; H^1_0(\Sigma) \right)} \widehat{f}_\e $, excepting the following convergences:

\begin{align}\label{fp3}
& \displaystyle \int_0^T \int_\Sigma  \nabla_x \theta f_{\e,k} g_{\e}\left(\widetilde{f}_k \right)\cdot \nabla_x \xi \f(t)\xrightarrow[k\to\infty]{\,}  \displaystyle \int_0^T \int_\Sigma \nabla_x \theta \widehat{f}_{\e} g_{\e}\left(\widetilde{f} \right)\cdot \nabla_x \xi \f(t)
\end{align}

\begin{align}\label{fp4}
& \displaystyle \int_0^T \int_\Sigma  \nabla^2_x \theta q f_{\e,k} g_{\e}\left(\widetilde{f}_k \right)\cdot \nabla_x \xi \f(t) \xrightarrow[k\to\infty]{\,}  \displaystyle \int_0^T \int_\Sigma \nabla^2_x \theta q \widehat{f}_{\e} g_{\e}\left(\widetilde{f} \right)\cdot \nabla_x \xi \f(t)
\end{align}

The above convergences hold true in wake of the strong convergence $\displaystyle g_{\e}\left(\widetilde{f}_k \right)\xrightarrow[k\to\infty]{L^2(\Sigma_T)}g_{\e}\left(\widetilde{f} \right)$, which is manifest in view of the fact that the function $z\mapsto g_\e(z)$ is an element of $W^{1,\infty}(\mb{R})$.

We eventually obtain the desired limit problem being satisfied by $\widehat{f}_{\e} $.  Moreover, the uniqueness of $\widehat{f}_{\e} $ tells that all sequences $\left\{f_{\e,k} \right\}_{k\in\mb{N}} $ converge towards $\widehat{f}_{\e} $, fact that ends the proof.
\end{proof}

Therefore, by Schauder's fixed-point Theorem we have a variational solution to the regularized problem \eqref{e12}-\eqref{e14}.

\subsection{Estimates Uniform In $\e$.}\label{ee}

We draw some inspiration from ~\cite{krz} and from ~\cite{chp1} for the obtention of $L^1$ - type estimates.  Here we obtain $\e$-free estimates for the solution $f_\e$ that solves \eqref{e12}-\eqref{e14}, in order to calculate the limit for $\e\to0$.  Moreover, the solution $f_\e$ is in fact a variational solution for it solves (see also \eqref{r9} and \eqref{e10})

\begin{align}\label{ee1}
& -\left(f_0,\psi \right)_{L^2(\Sigma)} \f(0)-\displaystyle\int_0^T \left(f_\e,\psi \right)_{L^2(\Sigma)} \f(t)\ud t +  \displaystyle\int_0^T  a_\e\left(t,f_\e,f_\e,\psi \right)\f(t) \ud t=0
\end{align}

for any $\psi \in H^1_0(\Sigma)$, and for any $\f\in \ms{C}^1(0,T)$ s.t. $\f(T)=0$.  This means that $f_\e\in X_T$ solves

\beq\label{ee2}
\dfrac{d f_\e}{dt}+A_\e\left(t,f_\e,f_\e \right)=0
\eeq

\subsubsection{$L^1(\Sigma)$  Estimates.}\label{eel}
 
For any $\eta>0$ we introduce the functions:

\begin{enumerate}[$\bullet$]
 \item an approximation of the function $\text{sgn}(y)$
 \[\beta_\eta:\mb{R}\mapsto\mb{R},\, \beta_\eta(y):=\dfrac{y}{\sqrt{y^2+\eta}}  \]
 \item an approximation of function $|y|$
 \[\gamma_\eta :\mb{R}\mapsto\mb{R},\,\gamma_\eta(y)=\sqrt{y^2+\eta}  \]
\end{enumerate}

Remark that $\beta_\eta,\gamma_\eta\in\ms{C}^\infty(\mb{R})$ and that 

\[\gamma'_\eta=\beta_\eta,\, \beta'_\eta(y)=\dfrac{\eta}{\left(y^2+\eta \right)^{3/2} }>0,\,\forall y\in\mb{R}  \]

Making use of \eqref{ee2} upon $\beta_\eta\left(f_\e \right)  $ gives

\beq\label{ee3}
\left\langle \dfrac{d f_\e}{dt},\beta_\eta\left(f_\e \right) \right\rangle + a_\e \left(t,f_\e,f_\e,\beta_\eta\left(f_\e \right) \right) = 0
\eeq
 
Also remark that as $\beta_\eta\in W^{1,\infty}(\mb{R}),\, \beta_\eta(0)=0 $, it implies $\beta_\eta\left(f_\e \right)\in L^2\left(0,T; H^1_0(\Sigma)\right)  $.  
 
Since $\gamma'_\eta\left(f_\e \right) =\beta_\eta\left(f_\e \right) $, then

\[\left\langle \dfrac{d f_\e}{dt},\beta_\eta\left(f_\e \right) \right\rangle = \dfrac{d}{dt} \displaystyle \int_\Sigma \gamma_\eta\left(f_\e \right)\ud x \ud q \]
 
We now have

\begin{align*}
& b_\e\left(t,f_\e,\beta_\eta\left(f_\e \right) \right) = \dfrac{1}{ \textbf{De}} \displaystyle \int_\Sigma \beta'_\eta\left(f_\e \right)  \theta \left\|\nabla_x f_\e \right\|^2 + \dfrac{2}{ \textbf{De}} \displaystyle \int_\Sigma \beta'_\eta\left(f_\e \right) \left( q\cdot \nabla_x\theta\right)  \left( \nabla_q f_\e \cdot  \nabla_x f_\e\right) \\
& + \dfrac{1}{Q_0^2 \textbf{De}} \displaystyle \int_\Sigma \beta'_\eta\left(f_\e \right)  \theta \left\|\nabla_q f_\e \right\|^2 \\
& \geq \displaystyle \int_\Sigma \beta'_\eta \left(f_\e \right) \bigg[ \dfrac{1}{ \textbf{De}} \theta_{\text{min}}\left\|\nabla_x f_\e \right\|^2 - \dfrac{2}{ \textbf{De}}\left\|\nabla_x \theta \right\|_{L^\infty(\Omega)}\left\|\nabla_x f_\e \right\|\left\|\nabla_q f_\e \right\| \\
& +\dfrac{1}{Q_0^2 \textbf{De}}\theta_{\text{min}} \left\|\nabla_q f_\e \right\|^2  \bigg]
\end{align*}

Due to the assumption \eqref{r5} we have

\[ b_\e\left(t,f_\e,\beta_\eta\left(f_\e \right) \right)\geq 0 \]

Remark that, owing to the assumption made on $v$, one has 

\begin{align*}
& \displaystyle \int_\Sigma v\cdot \nabla_x f_\e \beta_\eta \left(f_\e \right)  = \displaystyle \int_\Sigma v\cdot \nabla_x  \gamma_\eta \left(f_\e \right) = - \displaystyle \int_\Sigma \nabla_x\cdot v \gamma_\eta \left(f_\e \right) \\
& + \displaystyle \int_{\partial \Sigma} \left( v\cdot\nu\right)  \gamma_\eta \left(f_\e \right) =0
\end{align*}

Then, from \eqref{ee3} we get

\begin{align*}
& \dfrac{d}{dt} \displaystyle \int_\Sigma \gamma_\eta \left(f_\e \right) \ud x \ud q + \dfrac{1}{ \textbf{De}} \displaystyle \int_\Sigma f_\e g_\eta \left(f_\e \right)\cdot \beta'_\eta \left(f_\e \right) \nabla_x f_\e \cdot \left(2 \nabla_x \theta + q \nabla_x^2\theta  \right) \\
& - \displaystyle \int_\Sigma \g \cdot q f_\e \cdot \beta'_\eta \left(f_\e \right) \nabla_x f_\e  
+ \dfrac{1}{\textbf{De}} \displaystyle \int_\Sigma \dfrac{f_\e q}{\e+1-\|q\|^2}\cdot \beta'_\eta \left(f_\e \right)\nabla_q f_\e \leq 0
\end{align*}

We now integrate w.r.t. $t$ from 0 to a arbitrarily fixed $t\in (0,T)$ and take the limit $\eta\to 0$.  We shall make repeated use of Lemma 3.2 of ~\cite{chp1} with $h=f_\e$.  We deduce that

\begin{align*}
& \displaystyle \lim_{\eta\to0}\bigg\{ \displaystyle \int_{\Sigma_T}f_\e \beta'_\eta \left(f_\e \right)\big[ g_\eta \left(f_\e \right)\nabla_x f_\e \cdot  \left(2 \nabla_x \theta + q\nabla_x^2 \theta \right)\dfrac{1}{ \textbf{De}}    \\
&- \g \cdot q \cdot \nabla_x f_\e +  \dfrac{2}{ \textbf{De}} \dfrac{ q}{\e+1-\|q\|^2}\cdot \nabla_q f_\e \big]\bigg\}=0
\end{align*}

We also have that

\[ \displaystyle \lim_{\eta\to0} \displaystyle \int_\Sigma \gamma_\eta \left(f_\e \right)\ud x \ud q = \displaystyle \int_\Sigma \left|f_\e  \right| \ud x \ud q,\, \text{a.e.} t\in(0,T) \]

By the use of Lebesgue's dominated convergence Theorem we have that

\beq\label{ee4}
\left\| f_\e\right\|_{L^\infty\left(0,T; L^1(\Sigma) \right) }\leq \left\| f_0 \right\|_{L^1(\Sigma)}
\eeq

\%
\subsubsection{Estimates Uniform In $\e$ in Functional Space $X_T$.}\label{ex}

Apply \eqref{ee2} to $f_\e$ to get

\beq\label{ee5}
\dfrac{1}{2}\dfrac{d}{dt} \displaystyle \int_\Sigma f_\e^2 + a_\e\left(t,f_\e, f_\e, f_\e \right) =0
\eeq

Due to the assumption on $v$ one has 

\begin{align}\label{ee6}
& \displaystyle \int_\Sigma \left( v\cdot \nabla_x f_\e\right)  f_\e = \dfrac{1}{2} \displaystyle \int_\Sigma v\cdot \nabla_x \left( f_\e\right)^2 = -\dfrac{1}{2} \displaystyle \int_\Sigma \left(\nabla_x\cdot v \right)f^2_\e + \dfrac{1}{2} \displaystyle \int_{\partial\Sigma} \left( v\cdot \nabla_x\right) f^2_\e =0 
\end{align}

Due to Hardy's inequality,

\[ \left\| \dfrac{\psi}{1-\|q\|^2} \right\|_{L^2 \left( B(0,1) \right) } \leq c_H \left\|\nabla_q \psi  \right\|_{L^2\left(B(0,1)\right)},\, \forall \psi \in H^1_0 \left( B(0,1) \right) \]

with $c_H>0$ being Hardy's constant. Next, integrating on $\Omega$ gives

\beq\label{ee7}
\left\| \dfrac{\psi}{1-\|q\|^2} \right\|_{L^2 \left( \Sigma \right) } \leq c_H \left\|\nabla_q \psi  \right\|_{L^2\left(\Sigma\right)},\, \forall \psi \in H^1_0 \left( B(0,1) \right)
\eeq

Then we have

\begin{align*}
 & \left|\int_\Sigma \dfrac{\psi q}{\e+1-\|q\|^2}\cdot \nabla_q \psi  \right|\leq \int_\Sigma \dfrac{1}{1-\|q\|^2}|\psi|\left\|\nabla_q \psi  \right\| \leq \left\| \dfrac{\psi}{1-\|q\|^2} \right\|_{L^2 \left( \Sigma \right) } \left\|\nabla_q \psi  \right\|_{L^2 \left( \Sigma \right) }
\end{align*}

With the help of \eqref{ee7} we further get

\beq\label{ee8}
\left|\int_\Sigma \dfrac{\psi q}{\e+1-\|q\|^2}\cdot \nabla_q \psi  \right|\leq c_H \left\|\nabla_q \psi  \right\|^2_{L^2\left(\Sigma\right)},\, \forall \psi \in H^1_0 \left( B(0,1) \right)
\eeq

Then

\begin{align*}
& b(t,\psi,\psi) + \dfrac{2}{\textbf{De}} \int_\Sigma \dfrac{\psi q}{\e+1-\|q\|^2}\cdot \nabla_q \psi \geq \dfrac{\theta_{\text{min}}}{ \textbf{De}} \int_\Sigma \left\| \nabla_x \psi  \right\|^2 \\
& - \dfrac{1}{ \textbf{De}}  \left\| \nabla_x \theta  \right\|_{L^\infty(\Omega)} \int_\Sigma \left\| \nabla_x \psi  \right\|\left\| \nabla_q \psi  \right\| + \left( \dfrac{\theta_{\text{min}}}{Q_0^2 \textbf{De}} - 2\dfrac{ c_H}{ \textbf{De}} \right) \int_\Sigma \left\| \nabla_q \psi  \right\|^2 \\
& = \left\langle Bz,z \right\rangle 
\end{align*}

where 

\begin{eqnarray*}
B= 
\left( \begin{array}{cc}
\dfrac{\theta_{\text{min}}}{ \textbf{De}} & \quad -\dfrac{\left\|\nabla_x \theta  \right\|_{L^\infty(\Omega)}}{ \textbf{De}} \\[1cm]
-\dfrac{\left\|\nabla_x \theta  \right\|_{L^\infty(\Omega)}}{ \textbf{De}} & \quad \dfrac{\theta_{\text{min}}}{Q_0^2 \textbf{De}}- \dfrac{2 c_H}{\textbf{De}}
\end{array}\right)
\end{eqnarray*}

Actually, $B$ has to be a symmetric positive definite matrix; for those features to hold true we need to assume the following necessary and sufficient condition regarding the data:

\beq\label{ee9}
\dfrac{\theta^2_{\text{min}}}{ Q_0^2}- 2 c_H \theta_{\text{min}} - \left\|\nabla_x \theta  \right\|^2_{L^\infty(\Omega)}>0
\eeq

\begin{remark}
The assumption \eqref{ee9} is stronger then that of \eqref{r5} and  is valid insofar either $Q_0$ is small enough, or $\theta_{\text{min}}$ is sufficiently large.  In the following we shall tacitly admit the assumption \eqref{ee9} to hold true.
\end{remark}

Then one deduces there exists a $\Lambda_M>0$ s.t.

\begin{align}\label{ee10}
& b\left(t,\psi,\psi \right)+ \dfrac{2}{\textbf{De}}  \int_\Sigma \dfrac{\psi q}{\e+1-\|q\|^2} \cdot \nabla_q \psi \geq \Lambda_M \|\psi\|^2_{H^1(\Sigma)},\, \forall \psi\in H^1_0(\Sigma)
\end{align}

From the above, together with \eqref{ee5}, \eqref{ee6}, one obtains

\begin{align}\label{ee11}
& \dfrac{1}{2}\dfrac{d}{dt} \int_\Sigma f_\e^2 + \Lambda_M \left\|f_\e  \right\|^2_{H^1(\Sigma)}\leq \left|  \int_\Sigma \g \cdot q f_\e \cdot \nabla_q f_\e \right|\nonumber\\
& + \dfrac{1}{ \textbf{De}}\left|  \int_\Sigma f_\e g_\e\left(f_\e \right) \cdot \nabla_x f_\e \left( 2 \nabla_x \theta + q \nabla^2_x \theta \right) \right| 
\end{align}

Observe now that for any $\delta>0$, there exists a $c(\delta)\geq0$ (independent of $\e$), s.t.

\[ z|\ln z|\leq c(\delta) +z^{1+\delta} ,\, \forall z>0   \]

Since 

\[ \left| g_\e\left(z \right) \right|\leq |\ln z|,\, \forall z>0 \]

then

\[ z \left| g_\e\left(z \right) \right|\leq  c(\delta) +z^{1+\delta} ,\, \forall z>0 \]

On the other hand now, 

\[ |z| \left| g_\e\left(z \right) \right|\leq 1,\, \forall z<0 \]

From the above it follows that: for any $\delta>0$, there exists a $c(\delta)\geq0$ (independent of $\e$), s.t.

\beq\label{ee12}
\left| z g_\e\left(z \right) \right|\leq c(\delta)+|z|^{1+\delta} ,\, \forall z\in\mb{R}
\eeq

Now, from \eqref{ee11} and capitalizing on \eqref{ee12} gives

\begin{align}\label{ee13}
& \dfrac{1}{2}\dfrac{d}{dt} \int_\Sigma f_\e^2 + \Lambda_M \left\|f_\e  \right\|^2_{H^1(\Sigma)} \leq \dfrac{\Lambda_M}{4} \left\| \nabla_q f_\e \right\|^2_{L^2(\Sigma)}+ \dfrac{1}{\Lambda_M}\|\g\|^2_{L^\infty(\Omega)}\left\|  f_\e \right\|^2_{L^2(\Sigma)} \nonumber\\
& + \dfrac{3}{ \textbf{De}} \|\theta\|_{W^{2,\infty}(\Omega)}c(\delta)\int_\Sigma \left\|\nabla_x f_\e \right\| + \dfrac{3}{ \textbf{De}} \|\theta\|_{W^{2,\infty}(\Omega)} \int_\Sigma |f_\e|^{1+\delta}\left\|\nabla_x f_\e \right\|
\end{align}

Using the fact that

\[\displaystyle \int_\Sigma \left\|\nabla_x f_\e \right\|\leq \left|\text{mes}(\Sigma)  \right|^{1/2}  \left\|\nabla_x f_\e \right\| _{L^2(\Sigma)} \]

and that

\begin{align*}
& \dfrac{3}{ \textbf{De}}c(\delta) \left|\text{mes}(\Sigma)  \right|^{1/2} \|\theta\|_{W^{2,\infty}(\Omega)}   \left\|\nabla_x f_\e \right\| _{L^2(\Sigma)} \leq \dfrac{\Lambda_M}{4} \left\|\nabla_x f_\e \right\|^2 _{L^2(\Sigma)} \nonumber\\
& + \dfrac{1}{\Lambda_M} \dfrac{9}{ \textbf{De}^2} c^2(\delta)\left|\text{mes}(\Sigma)  \right| \|\theta\|^2_{W^{2,\infty}(\Omega)} 
\end{align*}

Next, using \eqref{ee13} we further obtain

\begin{align}\label{ee14}
& \dfrac{d}{dt} \int_\Sigma f_\e^2 + \Lambda_M \left\|f_\e  \right\|^2_{H^1(\Sigma)}\leq c_4 \left\|f_\e  \right\|^2 +c_5 + \dfrac{3}{ \textbf{De}} \|\theta\|_{W^{2,\infty}(\Omega)} \int_\Sigma \left|f_\e\right|^\delta |f_\e| \left\|\nabla_x f_\e \right\|
\end{align}

By selecting now a $\delta$ s.t. $0<\delta<1/2$, one gets

\[\int_\Sigma \left|f_\e\right|^\delta |f_\e| \left\|\nabla_x f_\e \right\| \leq  \left\| | f_\e|^\delta \right\|_{L^{1/\delta}(\Sigma)} \left\| f_\e \right\|_{\displaystyle L^{\frac{2}{1-2\delta}}(\Omega)} \left\| \nabla_x f_\e \right\|_{L^2(\Omega)} \]

Since, by \eqref{ee4}

\[ \left\| | f_\e |^\delta \right\|_{L^{1/\delta}(\Sigma)} = \left\| f_\e \right\|^\delta_{L^{1}(\Sigma)} \leq \|f_0\|^{\delta}_{L^{1}(\Sigma)} \]

then

\beq\label{ee15}
\int_\Sigma \left|f_\e\right|^\delta |f_\e | \left\|\nabla_x f_\e \right\| \leq \|f_0\|^{\delta}_{L^{1}(\Sigma)} \left\| f_\e \right\|_{\displaystyle L^{\frac{2}{1-2\delta}}(\Sigma)} \left\| \nabla_x f_\e \right\|_{L^2(\Sigma)}
\eeq

Next, by Sobolev's inclusions, taking a $\delta>0$ small enough, there exists a $\delta_1\in (0,1)$ s.t. 

\[\left\| f_\e \right\|_{\displaystyle L^{\frac{2}{1-2\delta}}(\Omega)} \leq c(\delta_1)\left\|f_\e \right\|_{H^{\delta_1}(\Omega)}, \, c(\delta_1)>0 \]

By interpolation we have

\[\left\|f_\e \right\|_{H^{\delta_1}(\Omega)} \leq  c(\delta_2) \left\|f_\e \right\|^{1-\delta_1}_{L^2(\Sigma)} \left\|f_\e \right\|^{\delta_1}_{H^1(\Sigma)} \]

Now, from \eqref{ee14} and \eqref{ee15}

\begin{align*}
& \dfrac{d}{dt} \int_\Sigma f_\e^2 + \Lambda_M \left\|f_\e  \right\|^2_{H^1(\Sigma)}\leq  c_4 \left\|f_\e  \right\|^2_{L^2(\Sigma)} +c_5 + \dfrac{3}{ \textbf{De}} \|f_0\|^{\delta}_{L^{1}(\Sigma)} \left\|f_\e \right\|^{1-\delta_1}_{L^2(\Sigma)} \left\|f_\e \right\|^{1+\delta_1}_{H^1(\Sigma)}
\end{align*}

By Young's inequality and for any $\eta>0$,

\begin{align*}
& \dfrac{3}{ \textbf{De}} \|f_0\|^{\delta}_{L^{1}(\Sigma)} \left\|f_\e \right\|^{1-\delta_1}_{L^2(\Sigma)} \left\|f_\e \right\|^{1+\delta_1}_{H^1(\Sigma)} \leq \dfrac{1+\delta_1}{2} \left(\eta \left\|f_\e \right\|^{1+\delta_1}_{H^1(\Sigma)} \right)^{\frac{2}{1+\delta_1}} \\
& + \dfrac{1-\delta_1}{2} \left[ \dfrac{3}{\eta \textbf{De}} \|f_0\|^{\delta}_{L^{1}(\Sigma)} \left\|f_\e \right\|^{1-\delta_1}_{L^2(\Sigma)} \right]^{\frac{2}{1-\delta_1}} 
\end{align*}

Taking $\eta>0$ small enough gives

\[\dfrac{d}{dt} \int_\Sigma f_\e^2 + \dfrac{\Lambda_M}{2} \left\|f_\e  \right\|^2_{H^1(\Sigma)}\leq c_5 + c_6  \left\|f_\e \right\|^2_{L^2(\Sigma)}  \]

By Gronwall's inequality we deduce (proceeding in a classical manner) that there exists a constant $c>0$ (which is  independent of $\e$ but depending upon $T$) s.t.

\beq\label{ee16}
\left\|f_\e  \right\|_{L^\infty\left(0,T; L^2(\Sigma) \right) }+\left\|f_\e  \right\|_{L^2\left(0,T; H^1(\Sigma) \right) } \leq c
\eeq

From \eqref{ee16} and \eqref{ee2} we can also prove that

\beq\label{ee17}
\left\|\dfrac{d f_\e}{dt}  \right\|_{L^\infty\left(0,T; H^{-1}(\Sigma) \right)}\leq c
\eeq

Actually, to prove \eqref{ee17} above, observe that $\displaystyle \left| a_\e\left(t,f_\e,f_\e,\psi \right)\right| \leq c \left\|f_\e  \right\|_{H^1(\Sigma)} \|\psi\|_{H^1(\Sigma)}  $ due to Hardy's inequality \eqref{ee7} and to \eqref{ee12} in which we set $\delta=1$.

Eventually, from \eqref{ee16} and \eqref{ee17}, we see that 

\beq\label{ee18}
\left\|f_\e  \right\|_{X_T}\leq c
\eeq

\subsubsection{Proof Of The Non-Negativity Of $f_\e$.}\label{ng}

We make the (physically sound) assumption that

\beq\label{ng1}
f_0\geq0
\eeq

Let now $f_\e$ be expressed as

\[f_\e=f^+_\e-f^-_\e,\, f^+_\e=\text{max}\{f_\e,0 \},\, f^-_\e=-\text{min}\{f_\e,0 \}  \]

Our goal is now to prove that $f^-_\e=0$ (so that $f_\e=f^+_\e\geq 0$).  We first apply \eqref{ee2} to $f^-_\e$ and this gives

\[ \left\langle \dfrac{d f_\e}{dt},f^-_\e  \right\rangle + a_\e\left(t,f_\e,f_\e,f^-_\e \right) =0\]

We now have (by a density-type argument) that

\[\left\langle \dfrac{d f_\e}{dt},f^-_\e  \right\rangle = -\dfrac{1}{2}\dfrac{d}{dt} \left\|f^-_\e  \right\|^2_{L^2(\Sigma)} \]

Next, since $\nabla_x f^+_\e \cdot \nabla_x f^-_\e = 0$

\[ b\left(f_\e, f^-_\e \right)= b\left(f^+_\e - f^-_\e,f^-_\e  \right)=-b\left(f^-_\e, f^-_\e \right)  \]

and

\[\displaystyle \int_\Sigma \left( v\cdot \nabla_x f_\e\right)f^-_\e = - \int_\Sigma \left( v\cdot \nabla_x f^-_\e\right)f^-_\e =0   \]

Next, for any $\eta>0$,

\begin{align*}
& \dfrac{1}{ \textbf{De}} \left|  \displaystyle \int_\Sigma g_\e\left(f_\e \right)f_\e\cdot \nabla_x f^-_\e \left(2\nabla_x \theta +  \nabla^2_x \theta \cdot q \right)  \right|\nonumber\\ 
& = \dfrac{1}{ \textbf{De}} \left|  \displaystyle \int_\Sigma g_\e\left(f_\e \right)f^-_\e\cdot \nabla_x f^-_\e \left(2\nabla_x \theta +  \nabla^2_x \theta \cdot q \right)  \right|\nonumber\\  
& \leq c(\e)\left\|f^-_\e \right\|_{L^2(\Sigma)} \left\|\nabla_x f^-_\e \right\|_{L^2(\Sigma)} \leq \eta \left\|\nabla_x f^-_\e \right\|^2_{L^2(\Sigma)} + \dfrac{c^2(\e)}{4\eta}\left\|f^-_\e \right\|_{L^2(\Sigma)} 
\end{align*}

Using that $\dfrac{1}{\e+1-\|q\|^2}\leq \dfrac{1}{\e}$, then for any $\eta>0$ one also gets

\begin{align*}
& \left| - \displaystyle \int_\Sigma \g\cdot q f_\e \nabla_q f^-_\e + \dfrac{2}{\textbf{De}} \int_\Sigma \dfrac{f_\e }{\e+1-\|q\|^2}q\cdot \nabla_q f^-_\e \right|\leq \eta \left\|\nabla_x f^-_\e \right\|^2_{L^2(\Sigma)}+ c(\e,\eta)\left\|f^-_\e \right\|_{L^2(\Sigma)}
\end{align*}

With the help of inequality \eqref{r6} and by taking $\eta>0$ small enough we obtain

\[\dfrac{d}{dt}\left\|f^-_\e  \right\|^2_{L^2(\Sigma)}\leq c(\e) \left\|f^-_\e  \right\|^2_{L^2(\Sigma)}   \]

Since $f^-_\e(t=0)=0 $, use of Gronwall's inequality leads to $f^-_\e=0 $, or put it differently,

\beq\label{ng2}
f_\e\geq0
\eeq

\subsection{Performing The Limit $\e\to0$.}\label{nge}

We now state the hardcore result:

\begin{theorem}[\textbf{Main Existence Result}]
There exists at least one solution $f\in X_T$, $f\geq0$ to the problem stated in \eqref{e10}.
\end{theorem}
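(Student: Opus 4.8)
The plan is to obtain $f$ as the limit of the regularized solutions $f_\e$ as $\e\to 0$, using the $\e$-uniform bounds established above. First I would invoke the uniform estimate \eqref{ee18}, $\|f_\e\|_{X_T}\le c$, to extract a subsequence (not relabeled) and a limit $f\in X_T$ such that $f_\e\rightharpoonup f$ weakly in $L^2(0,T;H^1_0(\Sigma))$, $\frac{df_\e}{dt}\rightharpoonup\frac{df}{dt}$ weakly in $L^2(0,T;H^{-1}(\Sigma))$, and, by the compact embedding $X_T\subset L^2(\Sigma_T)$ (Aubin--Lions), $f_\e\to f$ strongly in $L^2(\Sigma_T)$ and hence, along a further subsequence, a.e. on $\Sigma_T$. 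The bound $\|f\|_{X_T}\le c$ follows by weak lower semicontinuity, and the nonnegativity $f\ge 0$ is inherited from \eqref{ng2}, since strong $L^2$ (hence a.e.) convergence preserves the sign.

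It then remains to pass to the limit in the variational identity \eqref{ee1} (equivalently the $\e$-version of \eqref{e10}) and recover \eqref{e10}. All the terms linear in $f_\e$---the transport term $v\cdot\nabla_x f_\e$, the four linear second-order terms carrying $\theta$ and $q\cdot\nabla_x\theta$, and the $\g\cdot q$ drift---pass directly using the weak $L^2(0,T;H^1_0(\Sigma))$ convergence of $f_\e$ tested against the fixed $\psi\in H^1_0(\Sigma)$ and $\f\in\ms{C}^1(0,T)$ with $\f(T)=0$; the initial-datum term $-(f_0,\psi)_{L^2(\Sigma)}\f(0)$ is untouched, so the limit automatically carries the correct initial condition \eqref{e3}.

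The two genuinely nonlinear contributions are the obstacle. For the entropy-type terms $\frac{1}{\textbf{De}}\int_0^T\!\int_\Sigma (\nabla_x\theta)E_\e(f_\e)\cdot\nabla_x\psi\,\f$ and $\frac{1}{\textbf{De}}\int_0^T\!\int_\Sigma (\nabla_x^2\theta)q\,E_\e(f_\e)\cdot\nabla_x\psi\,\f$, I would show $E_\e(f_\e)\to E(f)$. Pointwise $E_\e(z)=z\,g_\e(z)\to z\ln z=E(z)$ for every $z\ge 0$ with $E_\e(0)=E(0)=0$, so the a.e.\ convergence $f_\e\to f$ gives $E_\e(f_\e)\to E(f)$ a.e. To upgrade this to $L^2(\Sigma_T)$-convergence it suffices to establish equi-integrability of $E_\e(f_\e)^2$: estimate \eqref{ee12} yields $|E_\e(f_\e)|\le c(\delta)+|f_\e|^{1+\delta}$, and combining the $L^\infty(0,T;L^1(\Sigma))$ bound \eqref{ee4} with the $L^2(0,T;H^1(\Sigma))\cap L^\infty(0,T;L^2(\Sigma))$ bound \eqref{ee16}, through the very Gagliardo--Nirenberg/Sobolev interpolation already exploited in \eqref{ee14}--\eqref{ee15}, bounds $f_\e$ in $L^{2+\delta'}(\Sigma_T)$ for a small $\delta'>0$; Vitali's theorem then delivers $E_\e(f_\e)\to E(f)$ in $L^2(\Sigma_T)$, which passes against the fixed $L^2$-factors $(\nabla_x\theta\cdot\nabla_x\psi)\f$ and $(\nabla_x^2\theta\,q\cdot\nabla_x\psi)\f$.

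For the singular FENE term $\frac{2}{\textbf{De}}\int_0^T\!\int_\Sigma \frac{q f_\e}{\e+1-\|q\|^2}\cdot\nabla_q\psi\,\f$, I would first note $\frac{1}{\e+1-\|q\|^2}\le\frac{1}{1-\|q\|^2}$, so Hardy's inequality \eqref{ee7} applied to $f_\e$ together with \eqref{ee16} bounds $\frac{q f_\e}{\e+1-\|q\|^2}$ uniformly in $L^2(\Sigma_T)$; a subsequence thus converges weakly in $L^2(\Sigma_T)$ to some $g$. To identify $g=\frac{q f}{1-\|q\|^2}$, I would test on any compact $K\subset\subset B(0,1)$, where the weight converges uniformly and $f_\e\to f$ strongly in $L^2$, so $\frac{q f_\e}{\e+1-\|q\|^2}\to\frac{q f}{1-\|q\|^2}$ strongly on $\Omega\times K\times(0,T)$, forcing $g=\frac{q f}{1-\|q\|^2}$ a.e. Pairing the weak limit against the fixed $\nabla_q\psi\in L^2(\Sigma)$ (times bounded $\f$) closes this term, and assembling all limits yields \eqref{e10}. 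I expect the entropy term $E_\e(f_\e)\to E(f)$ to be the main difficulty, since $\ln$ is unbounded near the origin and taming it requires precisely the super-quadratic integrability extracted from the $L^1$ bound and the interpolation inequalities.
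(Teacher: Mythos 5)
Your proposal is correct, and its skeleton coincides with the paper's: extract a subsequence via the uniform bound \eqref{ee18}, use Aubin--Lions compactness to get strong $L^2(\Sigma_T)$ and a.e.\ convergence, inherit $f\geq 0$ from \eqref{ng2}, pass the linear terms by weak convergence, and treat the entropy term and the singular FENE term as the two genuine obstacles. Where you diverge is in the technical handling of those two terms, and in both cases your variants are defensible and in one case arguably stronger. For the entropy term the paper also proves $E_\epsilon(f_\epsilon)\to E(f)$ a.e.\ (splitting the cases $f(x,t)>0$ and $f(x,t)=0$ exactly as you do), but then concludes by dominated convergence using a pointwise majorant built from an $L^2$ function $h$ dominating $|f_\epsilon|$ a.e.\ (their \eqref{ng7}); as written, a bound of the form $c(\delta)+h^{\delta}$ cannot majorize $z\ln z$ at infinity unless the exponent is read as $1+\delta$, and then the square of the majorant is only in $L^{1/(1+\delta)}$-type spaces, so the domination needed for $L^2$-convergence is not self-evident. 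Your route --- equi-integrability of $E_\epsilon(f_\epsilon)^2$ from \eqref{ee12} plus a uniform $L^{2+\delta'}(\Sigma_T)$ bound obtained by interpolating \eqref{ee4} and \eqref{ee16}, then Vitali --- supplies precisely the super-quadratic integrability that makes this step airtight, so it buys rigor where the paper is terse. For the FENE term the paper instead first restricts to $\psi\in\ms{D}(\Sigma)$, where $\bigl(\epsilon+1-\|q\|^2\bigr)^{-1}\nabla_q\psi$ converges in $L^\infty(\Sigma)$ because $\nabla_q\psi$ vanishes near $\|q\|=1$, and only at the very end upgrades to all $\psi\in H^1_0(\Sigma)$ by density, using Hardy's inequality \eqref{ee7} to make sense of the limit equation; your alternative --- uniform $L^2$ bound on $q f_\epsilon/(\epsilon+1-\|q\|^2)$ via Hardy, weak $L^2$ compactness, and identification of the weak limit on compacta $K\subset\subset B(0,1)$ --- works directly for arbitrary $\psi\in H^1_0(\Sigma)$ and dispenses with the final density step. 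Both treatments are correct; the paper's is more elementary term-by-term, yours trades the density argument for a weak-compactness identification and closes the known soft spot in the entropy limit.
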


\begin{proof}
 
The estimates of Sections \ref{ex} and \ref{ng} allow to deduce the existence of a $f\in X_T$, $f\geq0$, s.t. we have (up to a subsequence of $\e$, for simplicity also denoted by $\e$)

\begin{align}\label{ng3}
& f_\e \xrightharpoonup[]{L^2\left(0,T; H^1_0(\Sigma) \right) }  f_\e,\, \text{weakly} \nonumber\\
& f_\e \xrightharpoonup[]{L^\infty\left(0,T; L^2(\Sigma) \right) }  f_\e,\, \text{weakly}-\ast \nonumber\\
& \dfrac{d f_\e}{dt}\xrightharpoonup[]{L^2\left(0,T; H^{-1}(\Sigma) \right) } \dfrac{df}{dt},\, \text{weakly} \nonumber\\
& f_\e \xrightarrow[]{L^2\left(\Sigma_T) \right) }  f,\, \text{strongly, by compactness.} 
\end{align}

We now pass to the limit for $\e\to0$ in the variational formulation given in \eqref{e12}, \eqref{e13}, \eqref{e14}, which is:  

\begin{align}\label{ng4}
& -\displaystyle \int_0^T \int_{\Sigma} f_\e  \psi(x,q)\f'(t) -\int_{\Sigma} f_0(x,q)\f(0)\psi(x,q) +\displaystyle \int_0^T \int_{\Sigma} \left( v\cdot \nabla_x f_\e\right) \psi\f \nonumber\\
&  + \int_0^T \int_{\Sigma} b\left(t,f_\e,\psi \right)\f +  \dfrac{1}{ \textbf{De}} \int_0^T \int_{\Sigma}f_\e g_\e\left(f_\e \right)\nabla_x \psi \cdot   \left( 2\nabla_x \theta + \nabla^2_x \theta \cdot q\right)\f \nonumber\\
& - \displaystyle \int_0^T \int_{\Sigma}\left(  \g \cdot q f_e \cdot \nabla_q \psi \right)\f +  \dfrac{2}{\textbf{De}} \displaystyle \int_0^T \int_{\Sigma} \left( \dfrac{f_\e}{\e+1-\|q\|^2}q \cdot \nabla_q \psi\right) \f=0, \nonumber\\
& \forall \psi\in H^1_0(\Sigma), \, \forall \f\in\ms{C}^1(0,T),\, \f(T)=0
\end{align}

Let $\psi\in \ms{D}(\Sigma)$ in \eqref{ng4}.  Due to the convergences stated in \eqref{ng3} we get

\begin{align}\label{ng5}
& -\displaystyle \int_0^T \int_{\Sigma} f_\e  \psi(x,q)\f'(t) +\displaystyle \int_0^T \int_{\Sigma} \left( v\cdot \nabla_x f_\e\right) \psi\f + \int_0^T b\left(t,f_\e,\psi \right)\f \nonumber\\
& - \displaystyle \int_0^T \int_{\Sigma} \left( \g\cdot q f_\e\cdot \nabla_q \psi\right)  \f \nonumber\\
& \xrightarrow[\e\to0]{} -\displaystyle \int_0^T \int_{\Sigma} f \psi \f' + \displaystyle \int_0^T \int_{\Sigma} \left( v\cdot \nabla_x f\right) \psi\f \nonumber\\
& + \int_0^T b\left(t,f,\psi \right)\f - \displaystyle \int_0^T \int_{\Sigma} \left( \g\cdot q f\cdot \nabla_q \psi\right)  \f
\end{align}

Let us now prove the convergence 

\begin{align}\label{ng6}
& \int_0^T \int_{\Sigma}E_\e\left( f_\e\right)  \nabla_x \psi \cdot    \left( 2\nabla_x \theta + \nabla^2_x \theta\cdot q\right)\f \xrightarrow[\e\to0]{} \int_0^T \int_{\Sigma}E\left( f\right)  \nabla_x \psi \cdot    \left( 2\nabla_x \theta + \nabla^2_x \theta \cdot q\right)\f 
\end{align}

To achieve this, it suffices to prove the strong convergence 

\[  \displaystyle E_\e\left( f_\e\right) \xrightarrow[\e\to0]{L^2\left(\Sigma_T \right) } E\left( f\right) \]

From the strong convergence 

\[f_\e \xrightarrow[\e\to0]{\, } f  \]

we deduce that (up to a subsequence of $\e$)

\[ f_\e(x,t) \xrightarrow[\e\to0]{L^2\left(\Sigma_T \right) } f(x,t),\, \text{a.e.}\, (x,t)\in \left( \Sigma_T\right)  \]

and

\[\left|f_\e(x,t)  \right|\leq h(x,t),\, \text{a.e.}\, (x,t)\in \Sigma_T,\, h\in L^2\left( \Sigma_T\right)  \]

Observe function $E$ is decreasing on $(0,1/e)$, and increasing on $(1/e,+\infty)$.  Then,

\[\left| E\left( f_\e(x,t)\right)  \right| \leq \dfrac{1}{e},\, \text{for}\, 0\leq f_\e(x,t) \leq 1 \]

and

\[\left| E\left( f_\e(x,t)\right)  \right| \leq E\left( h(x,t)\right) ,\, \text{for}\, f_\e(x,t) > 1 \]

and 

\[\left| E_\e \left( z \right)  \right| \leq \left| E (z) \right| ,\,  \forall z\geq 0\]

It follows that for any $\delta>0$, there exists $c(\delta)>0$ independent of $\e$, s.t.

\beq\label{ng7}
\left| E_\e \left( f_\e(x,t)\right)  \right| \leq c(\delta)  +  h^\delta\left(x,t \right) ,\,\text{a.e.}\, (x,t)\in \Sigma_T 
\eeq

Next, consider $(x,t)\in\Sigma_T$ s.t. $f(x,t)>0$.  Taking $\e$ small enough we have

\[   g_\e \left(f_\e(x,t) \right)  = \ln \left(f_\e(x,t) \right)     \]

Then, by continuity 

\[ f_\e(x,t) g_\e \left(f_\e(x,t) \right) \xrightarrow[\e\to0]{ }  f(x,t) \ln \left(f(x,t) \right) = E\left(f(x,t) \right)   \]

Let us now consider $(x,t)\in \Sigma_T$ for which $f(x,t)=0$.  Then,

\[\left|f_\e(x,t)  g_\e \left(f_\e(x,t) \right) \right| \leq \left|f_\e(x,t) \ln \left(f_\e(x,t) \right)   \right| =\left|E\left(f_\e(x,t) \right)  \right| \xrightarrow[\e\to0]{ } E\left(f(x,t) \right) =0  \]

Then, for a.e. $(x,t)\in\Sigma_T$ we have 

\[E_\e\left(f_\e(x,t) \right)  \xrightarrow[\e\to0]{ } E\left(f(x,t) \right)  \]

With the help of \eqref{ng7} and upon using Lebesgue's Dominated Convergence Theorem one gets

\[E_\e\left(f_\e \right)  \xrightarrow[\e\to0]{L^2\left(\Sigma_T \right)  } E\left(f \right),\, \text{strongly}  \]

which ends the announced proof for \eqref{ng6}.

What is left over now to prove is the validity of the convergence

\begin{equation}\label{ng8}
\displaystyle \int_0^T \int_{\Sigma} \left( \dfrac{f_\e}{\e+1-\|q\|^2}q \cdot \nabla_q \psi\right) \f \xrightarrow[\e\to0]{} \displaystyle \int_0^T \int_{\Sigma} \left( \dfrac{f}{1-\|q\|^2}q \cdot \nabla_q \psi\right) \f
\end{equation}

We have

\begin{align*}
& \left\| \dfrac{1}{\e+1-\|q\|^2} \nabla_q \psi -\dfrac{1}{1-\|q\|^2} \nabla_q \psi \right\|= \dfrac{\e}{\left(1-\|q\|^2 \right) \left(\e+1-\|q\|^2 \right) } \left\| \nabla_q \psi \right\| \\
& \leq \e \dfrac{1}{\left(1-\|q\|^2 \right)^2 }\left\| \nabla_q \psi \right\|
\end{align*}

Since $\psi \in \ms{D}(\Sigma)$, then there exists a $\delta_2>0$ s.t. $1-\|q\|^2\geq \delta_2 $ whenever $\nabla_q \psi(x,q)\neq 0 $.  Then the function 

\[ \dfrac{1}{\left(1-\|q\|^2 \right)^2 }\left\| \nabla_q \psi \right\| \in L^\infty(\Sigma) \]

Therefore

\[ \dfrac{1}{\e+1-\|q\|^2} \nabla_q \psi  \xrightarrow[\e\to0]{L^\infty(\Sigma)} \dfrac{1}{1-\|q\|^2  } \nabla_q \psi,\, \text{strongly} \]

which implies the statement in \eqref{ng8}.  We then obtain from \eqref{ng4}, \eqref{ng5}, \eqref{ng6}, and \eqref{ng8}, that for any $\psi \in \ms{D}(\Sigma)$ and for any $\f\in \ms{C}^1\left(0,T \right)$, s.t. $\f(T)=0 $,

\begin{align}\label{ng9}
& -\displaystyle \int_0^T \int_{\Sigma} f \psi\f' -\int_{\Sigma} f_0 \f(0)\psi  +\displaystyle \int_0^T \int_{\Sigma} \left( v\cdot \nabla_x f\right) \psi\f \nonumber\\
&  + \int_0^T\int_{\Sigma} b\left(t,f,\psi \right)\f +  \dfrac{1}{ \textbf{De}} \int_0^T \int_{\Sigma}E(f)\nabla_x \psi \cdot   \left( 2\nabla_x \theta + q\nabla^2_x \theta\right)\f \nonumber\\
& - \displaystyle \int_0^T \int_{\Sigma}\left(  \g \cdot q f \cdot \nabla_q \psi \right)\f +  \dfrac{2}{\textbf{De}} \displaystyle \int_0^T \int_{\Sigma} \left( \dfrac{f}{1-\|q\|^2}q \cdot \nabla_q \psi\right) \f=0
\end{align}
 
Remark that $E(f)\in L^\infty\left(0,T;L^2(\Omega) \right)$ (because $\left|E(z) \right|\leq c(1+z^2) $, for any $z\geq0$), and that $\displaystyle \dfrac{f}{1-\|q\|^2}\in L^2\left( \Sigma_T\right)$ (because $f\in L^2\left(0,T; H^1_0(\Sigma) \right)$ and due to Hardy's inequality).  Then, because $\ms{D}(\Sigma)$  is densely included into $H^1_0(\Sigma)$, it allows to deduce that \eqref{ng9} is also valid for any $\psi\in H^1_0(\Sigma)$, fact that ends the proof.
 
\end{proof}

\section{Conclusions.}

In this paper we extended the early work of Curtiss and Bird ~\cite{bird3} on kinetic theory describing the temperature influence on the dynamics of polymer fluids.  Specifically, we derived a new configurational probability equation without the originally proposed ``linear gradient approximation'', fact that may account for some shortcomings pointed out in the original work ~\cite{bird3}.  The resulting transport equation of this  work is consequently non-linear and (hence) more complex in nature compared to the original (linear) one.   As a first step towards putting it to work for practical purposes we proved the existence of positive variational solutions.

Subsequent work devoted to obtaining the corresponding temperature diffusion equation by
accounting for the various molecular interactions responsible for the heat that is conveyed by
the fluid is the focus of a forthcoming paper.

\section{Acknowledgements} 


The Authors were deeply saddened by the untimely passing of Professor Genevi\`eve Raugel whom they got to know and took benefit of her outstanding scientific skills and, also, to appreciate her generous human and personality stances.   

L. I. Palade gratefully acknowledges Professor C\u at\u alin Radu Picu, RPI, Troy (NY), for useful talks on polymer dynamics.

\end{document}